\providecommand{\tabularnewline}{\\}
\theoremstyle{plain}
\newtheorem{thm}{\protect\theoremname}
  \theoremstyle{plain}
  \newtheorem{lem}[thm]{\protect\lemmaname}
  \providecommand{\lemmaname}{Lemma}
\providecommand{\theoremname}{Theorem}
\begin{document}

\title{Tight Algorithms for Vertex Cover with Hard Capacities on Multigraphs
and Hypergraphs}

\author{Sam Chiu-wai Wong\\
UC Berkeley\\
samcwong@berkeley.edu}
\maketitle
\begin{abstract}
In this paper we give a $f$-approximation algorithm for the minimum
unweighted Vertex Cover problem with Hard Capacity constraints (VCHC)
on $f$-hypergraphs. This problem generalizes standard vertex cover
for which the best known approximation ratio is also $f$ and cannot
be improved assuming the unique game conjecture. Our result is therefore
essentially the best possible. This improves over the previous 2.155
(for $f=2$) and $2f$ approximation algorithms by Cheung, Goemans
and Wong (CGW).

At the heart of our approach is to apply iterative rounding to a natural
LP relaxation that is slightly different from prior works which used
(non-iterative) rounding. Our algorithm is significantly simpler and
offers an intuitive explanation why $f$-approximation can be achieved
for VCHC. We also present faster implementations of our method based
on iteratively rounding the solution to certain CGW-style covering
LPs.

We note that independent of this work, Kao \cite{kao2017iterative}
also recently obtained the same result.

\newpage
\end{abstract}

\section{Introduction}

The minimum vertex cover problem is one of the earliest NP-hard problems
studied in combinatorial optimization. In its most basic form, given
a graph $G=(V,E)$ we are asked to find a subset $U\subseteq V$,
called vertex cover, so that every edge $e\in E$ intersects $U$
in at least one of its two endpoints. The objective is to minimize
the size of $U$. Curiously, despite decades of efforts the best known
algorithms for this problem are the 2-approximation which can either
be done by LP relaxation or a simple greedy procedure. The minimum
vertex cover problem lends itself to a natural generalization to $f$-hyergraphs
where an edge $e\in E$ can have as many as $f$ endpoints. It is
not a difficult matter to generalize the 2-approximation to $f$-approximation
for this version. The seminal result of Khot showed that these algorithms
are in fact optimal assuming the Unique Game Conjecture (UGC) \cite{KhotR08}.

Chuzhoy and Naor \cite{ChuzhoyN06} initiated the study of vertex
cover with \textit{hard capacity} constraints (VCHC) where we have
a capacity of $k_{v}\geq0$ for each $v\in V$ and (a copy of) $v$
can cover at most $k_{v}$ of its incident edges. The objective is
still to minimize the size of the vertex cover found. They gave a
natural LP relaxation for VCHC from which a 3-approximation is derived
via randomized rounding for graphs with \textit{no multiple edges}.
Their analysis is based on Chebyshev inequality. Subsequently Gandhi
et al. \cite{GandhiHKKS06} improved this to a tight 2-approximation
by using Chernoff in place of Chebyshev with a much more involved
analysis. Both of these algorithms fail to work for multigraphs (graphs
possibly with multiple edges) or hypergraphs essentially because in
such cases the random variables in their analyses become unbounded
and standard concentration inequalities do not apply.

Progress had been stagnated until Saha and Khuller gave a $\min\{6f,65\}$
approximation for VCHC on hypergraphs \cite{SahaK12}. Their idea
is to apply randomized rounding for random variables at different
scales to salvage Chernoff. Partly inspired by their result, Cheung,
Goemans and Wong (CGW) surprisingly gave simple deterministic rounding
algorithms which achieve significantly better approximation ratios
of 2.155 (for graphs) and $2f$ \cite{cheung2014improved}. Their
method is to formulate the coverage requirement of randomized rounding,
used in all previous works, in terms of another LP and study the property
of its extreme point solutions. In other words, their approach is
a 2-stage LP rounding procedure which solves the same LP relaxation
followed by the ``coverage requirement LP''.

\subsection{Our contribution}

We propose a new simple approach to the problem based on iterative
rounding without using new LPs. Our algorithm achieves the best possible
approximation ratio $f$ and essentially settles its approximability.
Our approach is inspired by ideas used in previous works, most notably
CGW which considers an extreme point solution to certain covering
LPs. In hindsight their method suggested the possibility of a better
approximation obtained by iterative rounding, which often exploits
the structure of extreme point solutions. We also show that when combined
with iterative rounding, CGW approach can be extended to give another
$f$-approximation. Although more contrived, this alternate algorithm
may be preferred as it involves iteratively rounding the solution
to so-called \textit{covering LP}s, which can be solved faster than
general LPs using dedicated algorithms \cite{plotkin1995fast}.

\begin{center}
\begin{tabular}{|c|c|c|}
\hline 
Authors & $\begin{array}{c}
\text{approx. ratio}\\
\text{(graphs, hypergraphs)}
\end{array}$ & multigraphs and hypergraphs okay?\tabularnewline
\hline 
\hline 
Chuzhoy, Naor \cite{ChuzhoyN06} & 3, {*} & no\tabularnewline
\hline 
Gandhi et al. \cite{GandhiHKKS06} & 2, {*} & no\tabularnewline
\hline 
Saha, Khuller \cite{SahaK12} & 12, $\min\{6f,65\}$  & yes\tabularnewline
\hline 
Cheung, Goemans, Wong \cite{cheung2014improved} & 2.155, $2f$ & yes\tabularnewline
\hline 
This paper, \cite{kao2017iterative} & 2, $f$ & yes\tabularnewline
\hline 
\end{tabular}
\par\end{center}

\subsection{Other related works}

Prior to the work of Chuzhoy and Naor \cite{ChuzhoyN06} which initiated
the study of vertex cover with \textit{hard} capacity constraints,
Guha et al. \cite{GuhaHKO02} resolved the problem with \textit{soft}
capacity constraints (where a vertex can be used an arbitrary number
of times) using a clever primal-dual algorithm. Notably, their result
holds even for the weighted setting whereas the hard capacity version
is as hard as set cover in the weighted case and an approximation
ratio of $O(\log n)$ is optimal \cite{ChuzhoyN06}. Using dependent
randomized rounding, another 2-approximation for the soft capacity
version was given by Gandhi et al. \cite{GandhiKPS02}.

\section{Preliminaries}

Let $G=(V,E)$ be a multigraph. We write $u\in e$ to indicate that
$u$ is an endpoint of edge $e\in E$.

The minimum Vertex Cover problem with Hard Capacity constraints (VCHC)
is specified by $(V,E,k,m)$, where 
\begin{itemize}
\item $G=(V,E)$ is the input multigraph, 
\item For each $v\in V$, $m_{v}$ denotes the maximum number of copies
of $v$ one can select, 
\item For each $v\in V$, $k_{v}$ is the number of incident edges (a copy
of) $v$ can cover. 
\end{itemize}
A solution to VCHC consists of $(x,y)=(\{x_{v}\}_{v\in V},\{y(e,v)\}_{e\in E,v\in e})$.
Here $x_{v}$ is the number of copies of vertex $v$ selected, and
the assignment variable $y(e,v)\in\{0,1\}$ represents whether edge
$e$ is covered by $v$, for each $e\in E$ and $v\in e$. A solution
$(x,y)$ is \emph{feasible} for VCHC if
\begin{enumerate}
\item For all $v\in V$: $x_{v}\in\{0,1,\cdots,m_{v}\}$, 
\item For all $e\in E$: $\sum_{v\in e}y(e,v)=1$ (i.e.~any edge must be
covered by one of its endpoints), 
\item For all $v\in V$: $|\{e:y(e,v)=1\}|\leq k_{v}x_{v}$ (i.e.~the total
number of edges assigned to $v$ does not exceed its total capacity). 
\end{enumerate}
The objective of VCHC is to find a feasible solution $(x,y)$ for
VCHC that minimizes $\sum_{v\in V}x_{v}$, the size of the vertex
cover. As VCHC generalizes the classical minimum vertex cover problem
which is already NP-hard, we provide efficient algorithms for finding
good approximate solutions. Our approach is based on rounding a fractional
solution to the following \textbf{LP1} relaxation, which has been
used extensively in the literature \cite{ChuzhoyN06,GandhiHKKS06,SahaK12,cheung2014improved}.
\begin{subequations}\label{LP:VCHC} 
\begin{alignat}{2}
\text{min } & \sum_{v\in V}x_{v}\nonumber \\
\text{s.t. } & \sum_{v\in e}y(e,v)=1 & \quad & \forall e\in E\label{eq:1.1}\\
 & y(e,v)\leq x_{v} & \quad & \forall e\in E,v\in e\label{eq:1.2}\\
 & \sum_{e\in\delta(v)}y(e,v)\leq k_{v}x_{v} & \quad & \forall v\in V\label{eq:1.3}\\
 & x_{v}\leq m_{v} & \quad & \forall v\in V\label{eq:1.4}\\
 & x,y\geq0\label{eq:1.5}
\end{alignat}
\end{subequations}Here $x_{u}$ denotes the number of copies of $u$
selected and $y(e,v)$ indicates whether $e$ is covered by $v$.
The first constraint says that each $e\in E$ should be covered by
one of its endpoints, the second ensures that $e$ can be covered
by a vertex selected, and the third is the capacity constraint.

The following lemma shows that, when constructing a feasible solution
to VCHC, we only need the integrality of $x$, and not of $y$. This
follows easily by the integrality of flows in networks with integer
capacities. We refer readers to \cite{ChuzhoyN06,SahaK12} for a proof.
\begin{lem}[Chuzhoy and Naor \cite{ChuzhoyN06}, generalized to hypergraphs by
Saha and Khuller \cite{SahaK12}]
\label{b-matching} If $(x,y)$ is feasible for LP1, and $x$ is
integral, there exists an integral $y'$ such that $(x,y')$ is feasible
for LP \ref{LP:VCHC}, and $y'$ can be found efficiently by a maximum
flow computation.
\end{lem}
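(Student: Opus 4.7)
The plan is to reduce the problem of finding an integral $y'$ compatible with the fixed integral $x$ to a maximum flow computation on a network with integer capacities, and then invoke the classical integrality-of-flows theorem to conclude.

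More concretely, I would build a directed flow network $N$ as follows. Create a source $s$, a sink $t$, a node $u_e$ for every hyperedge $e\in E$, and a node $w_v$ for every vertex $v\in V$. Add an arc $s\to u_e$ of capacity $1$ for each $e\in E$ (encoding constraint \eqref{eq:1.1}), an arc $u_e\to w_v$ of capacity $1$ whenever $v\in e$ (encoding constraint \eqref{eq:1.2}), and an arc $w_v\to t$ of capacity $k_v x_v$ for each $v\in V$ (encoding constraint \eqref{eq:1.3}). Since $x$ is integral and $k_v$ is an integer, all arc capacities are integers.

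Next I would check the two directions. Given the feasible fractional $(x,y)$ from the hypothesis, interpret $y(e,v)$ as the flow on the arc $u_e\to w_v$ and push $1$ unit of flow from $s$ through each $u_e$; constraint \eqref{eq:1.1} guarantees conservation at each $u_e$, constraint \eqref{eq:1.2} guarantees the arc capacity $1$ is respected (in fact $y(e,v)\le x_v$ suffices since a single unit flows through), and constraint \eqref{eq:1.3} guarantees the capacity $k_v x_v$ on $w_v\to t$ is respected. Hence there is a feasible (fractional) $s$--$t$ flow of value $|E|$, so the maximum flow value in $N$ is exactly $|E|$.

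Because all capacities are integral, the integrality theorem for max flow (Ford--Fulkerson) produces an integral maximum flow $f^\star$ of value $|E|$ in polynomial time. Defining $y'(e,v)$ to be the value of $f^\star$ on the arc $u_e\to w_v$, we obtain a $\{0,1\}$-valued assignment: each $u_e$ must send exactly one unit forward, so $\sum_{v\in e} y'(e,v)=1$; the arc capacities force $y'(e,v)\le 1\le x_v$ when $x_v\ge 1$ (and if $x_v=0$ no flow can reach $w_v$ since the $w_v\to t$ capacity is $0$); and the capacity on $w_v\to t$ enforces $\sum_{e\in\delta(v)} y'(e,v)\le k_v x_v$. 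Thus $(x,y')$ is feasible for LP1 and integral. Since this reduction is a single max-flow computation, there is no real obstacle — the only subtle point is remembering that fractional $y$ alone guarantees the max flow equals $|E|$, which is the hypothesis we need to invoke integrality.
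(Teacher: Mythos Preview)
Your proposal is correct and is precisely the standard max-flow reduction the paper alludes to; the paper itself does not spell out a proof but simply notes that it ``follows easily by the integrality of flows in networks with integer capacities'' and defers to \cite{ChuzhoyN06,SahaK12}. Your construction and verification fill in exactly those details, so there is nothing to add.
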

In light of this lemma, it suffices to identify a feasible integral
solution $x$ with a good approximation guarantee.

\section{$f$-approximation for VCHC on $f$-hypergraphs}

Let $(x^{*},y^{*})$ be an optimal extreme point solution to LP1,
and $U=\{u\in V:x_{u}^{*}\geq1/f\}$. A natural idea used in all previous
works is round up $u\in U$ which involves only a factor $f$ blowup,
and select judiciously a subset of $W=\{w\in V:0<x_{w}^{*}<1/f\}$.

\paragraph{Covering tight edges}

Our iterative rounding scheme\footnote{See e.g. \cite{lau2011iterative} for the background on iterative
rounding which was introduced by Jain \cite{jain2001factor}.} is based on the observation that a \textit{tight} edge $e\in\delta(u)$
with $y(e,u)=x_{u}\geq1/f$ can be rounded up while respecting the
capacity constraint. This follows from the capacity constraint used
in the LP1 where R.H.S. is $k_{u}x_{u}$. Therefore we may effectively
remove $e$ from the LP by covering $e$ with $u$ and decreasing
$k_{u}$ by 1, and solve the new smaller LP relaxation. A similar
argument was used in the (non-iterative) rounding algorithms in \cite{ChuzhoyN06,GandhiHKKS06}.

Nevertheless, one complication arises as any $x_{u}^{*}\geq1/f$ can
in principle drop below $1/f$ in later iterations of the algorithm
and end up not being selected, i.e. $x_{u}=0$ in the final solution.
In this case covering $e$ by $u$ is not justified. Here we introduce
the constraint $1/f\leq x_{u}$ to the rescue. It ensures that any
$u$ with $x_{u}^{*}\geq1/f$ will stay above $1/f$ ever after.

\paragraph{Fixing $x_{u}^{*}=1/f$ }

To further simplify the LP, we observe that any $x_{u}^{*}=1/f$ can
be readily rounded up and removed from the LP. In terms of cost this
is a good idea as the approximation ratio incurred is exactly $f$,
meaning that we are not being lossy. Moreover, it ensures that any
edge would always have an endpoint in $U$ (see Lemma \ref{lem:intersectU})
which is important when we bound the approximation ratio by exploiting
the structure of the extreme point solution in the proof of Lemma
\ref{lem:extpt}. The idea of examining an extreme point solution
was inspired by \cite{cheung2014improved}.

\paragraph{Modified LP relaxation}

We incorporate these insights into \textbf{LP2} below. LP2 resembles
the form of LP1 with a few important modifications. The first constraint
involves $\bar{y}(e,v)$ which is the coverage of $v$ towards $e$
in the final solution. This is a result of fixing $x_{u}^{*}=1/f$
as discussed above. The third constraint has $k_{u}-|T_{u}|$ in place
of $k_{u}$; here $|T_{u}|$ is the number of tight edges covered
by $u$ so we simply subtract $|T_{u}|$ from the capacity $k_{u}$.
The fourth constraint now sums over only non-tight edges as any tight
edges have already benn covered by its endpoint in $U$. Finally a
new constraint $x_{u}\geq1/f$ is introduced to ensure that $x_{u}^{*}\geq1/f$
cannot drop below $1/f$. The last new constraint $x_{w}\leq1/f$
is, strictly speaking, not needed but is included to simplify our
exposition.

\begin{subequations}\label{LP:VCHC-2}

\begin{alignat}{2}
\text{min } & \sum_{v\in V\backslash D}x_{v}\nonumber \\
\text{s.t. } & \sum_{v\in e\backslash D}y(e,v)=1-\sum_{v\in e\cap D}\bar{y}(e,v) & \quad & \forall e\in E\backslash T\label{eq:1'.1}\\
 & y(e,v)\leq x_{v} & \quad & \forall e\in E\backslash T,v\in e\backslash D\label{eq:1.2-1}\\
 & \sum_{e\in\delta(u)\backslash T}y(e,u)\leq(k_{u}-|T_{u}|)x_{u} & \quad & \forall u\in U_{>}\label{eq:1.3-1}\\
 & \sum_{e\in\delta(w)\backslash T}y(e,w)\leq k_{w}x_{w} & \quad & \forall w\in W\\
 & 1/f\leq x_{u}\leq m_{u} & \quad & \forall u\in U_{>}\label{eq:1.4-1}\\
 & 0\leq x_{w}\leq1/f & \quad & \forall u\in W\label{eq:1.5-1}\\
 & y\geq0
\end{alignat}
\end{subequations}

\paragraph{Algorithm}

We adopt the following notations:
\begin{itemize}
\item $U=\{u\in V:x_{u}^{*}\geq1/f\}$, $W=\{w\in V:0<x_{w}^{*}<1/f\}$,
$Z=\{z\in V:x_{z}^{*}=0\}$. 
\item Further divide $U$ into $U_{>}=\{u\in U:x_{u}^{*}>1/f\}$ and $U_{=}=\{u\in U:x_{u}^{*}=1/f\}$.
\item $T=\bigcup_{u\in V}T_{u}$ is a disjoint union of edges $e\in T_{u}$
covered by $u\in V$. We call $T$ the set of tight edges.
\item $D\subseteq V$ is the set of vertex $v$ whose $\bar{x}_{v}$ has
been determined.
\end{itemize}
Our algorithm is based on performing iterative rounding on LP2. It
incrementally builds up a feasible solution $(\bar{x},\bar{y})$ to
VCHC for which $\bar{x}$ is integral.

\begin{algorithm}[H]
\SetAlgoLined   
Solve LP1 for an extreme point solution $(x^{*},y^{*})$. Initially $T_{u}=\emptyset$ and $D=\emptyset$; initialize $U,U_{>},U_{=},W,Z$ based on $(x^{*},y^{*})$.\;
\Repeat{$U_{=}=Z=\emptyset$ and $y^{*}(e,u)<x_{u}^{*}\forall u\in U,e\in\delta(u)\backslash T$}{
\For{$v\in Z$}{set $\bar{x}_{v}=x_{v}^{*}=0$, $\bar{y}(e,v)=y^{*}(e,v)=0$ for $e\in\delta(v)$\; $D\longleftarrow D\cup\{v\}$\;}
\If{$y^{*}(e,u)=x_{u}^{*}$ for some $u\in U$ and $e\in\delta(u)\backslash T$} {set $\bar{y}(e,u)=1$, $\bar{y}(e,v)=0$ for $v\in e\backslash\{u\}$\; $T_{u}\longleftarrow T_{u}\cup\{e\}$\;}
\For{$u\in U_{=}$} {set $\bar{x}_{u}=1$, $\bar{y}(e,u)=y^{*}(e,u)$ for $e\in\delta(u)\backslash T$\; $D\longleftarrow D\cup\{u\}$\;}
Solve updated LP2 for a new extreme point solution $(x^{*},y^{*})$. Update $U,U_{>},U_{=},W,Z$ based on $(x^{*},y^{*})$.
}
For $v\notin D$, set $\bar{x}_{v}=\lceil x_{v}^{*}\rceil$ and $\bar{y}(e,v)=y^{*}(e,v)$ for $e\in\delta(v)\backslash T$
\caption{Iterative Rounding Algorithm for VCHC} 
\end{algorithm}

\paragraph{Analysis}

The algorithm works mostly by design. First we demonstrate feasibility.
\begin{lem}
\label{lem:stayabove}Suppose a vertex $u$ satisfies $x_{u}^{*}\geq1/f$
at some time during the execution of the algorithm. We must then have
$x_{u}^{*}\geq1/f$ after so long as $u\notin D$. Moreover, in the
final solution $\bar{x}_{u}\geq1$.
\end{lem}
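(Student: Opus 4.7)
The plan is to prove both statements simultaneously by induction on the iteration index, with the crucial ingredient being the explicit lower-bound constraint $1/f \leq x_u$ enforced in LP2 for every $u \in U_{>}$.

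For the first statement, I would argue by induction on the number of iterations after $u$ first achieves $x_{u}^{*} \geq 1/f$. Consider any iteration in which the extreme point solution satisfies $x_{u}^{*} \geq 1/f$. Then $u \in U = U_{>} \cup U_{=}$. If $u \in U_{=}$, the algorithm places $u$ into $D$ in that very iteration (with $\bar{x}_u = 1$), so there is nothing to maintain afterwards. If instead $u \in U_{>}$, then the LP2 solved in the next iteration explicitly includes the constraint $1/f \leq x_{u}$ from line~\eqref{eq:1.4-1}. Hence any feasible solution, and in particular the new extreme point $(x^{*},y^{*})$, satisfies $x_{u}^{*} \geq 1/f$. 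Reclassifying $u$ based on the new solution again yields $u \in U_{>} \cup U_{=}$, and the induction step closes. This shows that once $x_{u}^{*} \geq 1/f$, the inequality $x_{u}^{*} \geq 1/f$ persists in every subsequent iteration as long as $u \notin D$.

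For the moreover statement, I would split on how $u$ exits the loop. Either $u$ remains outside $D$ when the loop terminates, or $u$ is added to $D$ at some iteration. In the first case, the final assignment is $\bar{x}_{u} = \lceil x_{u}^{*} \rceil$; since by the first part $x_{u}^{*} \geq 1/f > 0$, we get $\bar{x}_{u} \geq 1$. In the second case, the algorithm adds a vertex to $D$ in only two places: the loop over $Z$ (where $\bar{x}_{v} = 0$) and the loop over $U_{=}$ (where $\bar{x}_{u} = 1$). The first possibility is excluded by the invariant just proved, because $x_{u}^{*} \geq 1/f$ forbids $u \in Z$. Therefore $u$ must enter $D$ via the $U_{=}$ branch, giving $\bar{x}_{u} = 1$.

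I do not anticipate a serious obstacle: the lemma is essentially verifying that LP2 was designed to carry exactly this invariant. The only point requiring a little care is to make sure that, in passing from one iteration to the next, the freshly re-solved LP really does contain the constraint $x_u \geq 1/f$ for the vertex in question. This follows because the algorithm updates $U_{>}$ after each LP solve, and any $u$ with $x_{u}^{*} \geq 1/f$ that is not immediately rounded lands in $U_{>}$ precisely in time for the next LP2 to enforce the lower bound on it.
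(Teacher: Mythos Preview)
Your proposal is correct and follows essentially the same approach as the paper: both argue that once $x_u^* \geq 1/f$, either $u \in U_{=}$ is immediately placed into $D$ with $\bar{x}_u = 1$, or $u \in U_{>}$ and the explicit constraint $x_u \geq 1/f$ in LP2 preserves the invariant, with the final value $\bar{x}_u = \lceil x_u^* \rceil \geq 1$. Your version is slightly more explicit (framing it as an induction and separately ruling out the $Z$ branch), but the substance is identical.
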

\begin{proof}
As soon as $x_{u}^{*}\geq1/f$, we either have $u\in U_{=}$ or $u\in U_{>}$.
In the former case, $u$ is immediately inserted into $D$ (line 13)
and $\bar{x}_{u}=1$. In the latter case the constraint $x_{u}\ge1/f$
ensures $x_{u}^{*}\geq1/f$ ever after. Eventually we either have
$u\in U_{=}$ (so $\bar{x}_{u}=1$) or $\bar{x}_{u}=\lceil x_{u}^{*}\rceil\geq\lceil1/f\rceil\geq1$. 
\end{proof}
\begin{lem}
\label{lem:feas}The final solution $(\bar{x},\bar{y})$ is feasible
with $\bar{x}$ integral.
\end{lem}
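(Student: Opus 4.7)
The plan is to verify the LP1 feasibility conditions for $(\bar{x},\bar{y})$ directly, treating the three vertex categories ($Z$, $U_=$, and the survivors at termination) and the two edge categories (tight vs.\ non-tight) in turn.

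Integrality of $\bar{x}$ together with the bound $\bar{x}_v\leq m_v$ is essentially immediate. Every vertex is assigned exactly once: $\bar{x}_v=0$ when $v\in Z$, $\bar{x}_v=1$ when $v\in U_=$, or $\bar{x}_v=\lceil x_v^*\rceil$ for $v\notin D$ at termination. Each value is an integer, and the bound by $m_v$ uses integrality of $m_v$ together with the LP constraint $x_v^*\leq m_v$. For edge coverage $\sum_{v\in e}\bar{y}(e,v)=1$, every $e\in E$ is either absorbed into some $T_u$, in which case $\bar{y}(e,u)=1$ and $\bar{y}(e,v')=0$ for the other endpoints, or survives until termination, where constraint (\ref{eq:1'.1}) becomes $\sum_{v\in e\cap D}\bar{y}(e,v)+\sum_{v\in e\backslash D}y^*(e,v)=1$, which is exactly the desired identity after we set $\bar{y}(e,v)=y^*(e,v)$ on $V\backslash D$. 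The pointwise constraint $\bar{y}(e,v)\leq \bar{x}_v$ splits analogously: on tight edges, Lemma~\ref{lem:stayabove} gives $\bar{x}_u\geq 1=\bar{y}(e,u)$; on non-tight edges, (\ref{eq:1.2-1}) together with $\bar{x}_v\geq x_v^*$ (or $\bar{x}_v=1\geq 1/f\geq y^*(e,v)$ in the $U_=$ branch) suffices.

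The substantive part is the capacity constraint $\sum_{e\in\delta(v)}\bar{y}(e,v)\leq k_v\bar{x}_v$. I would split the left side as $|T_v|$ (from tight edges) plus $\sum_{e\in\delta(v)\backslash T}y^*(e,v)$, bounding the latter by $(k_v-|T_v|)x_v^*$ when the LP2 constraint in force for $v$ was (\ref{eq:1.3-1}), or by $k_v x_v^*$ with $T_v=\emptyset$ when $v$ lay in $W$ at the moment $\bar{x}_v$ was fixed. Rewriting $|T_v|+(k_v-|T_v|)x_v^*=k_v x_v^*+|T_v|(1-x_v^*)$ and case-splitting on $x_v^*\leq 1$ versus $x_v^*>1$, the required bound $k_v\bar{x}_v=k_v\lceil x_v^*\rceil$ reduces entirely to the invariant $|T_v|\leq k_v$. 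This invariant is the main obstacle; I would prove it by induction on iterations: a new tight edge is added to $T_v$ only when $y^*(e,v)=x_v^*>0$ satisfies the current constraint (\ref{eq:1.3-1}), forcing $y^*(e,v)\leq(k_v-|T_v|)x_v^*$ and hence $k_v-|T_v|\geq 1$ before the addition, so $|T_v|\leq k_v$ is preserved afterwards.
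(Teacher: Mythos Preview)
Your proposal is correct and follows essentially the same approach as the paper---verifying LP1 feasibility by handling each branch of the algorithm separately. Your treatment is considerably more detailed than the paper's, which dispatches the capacity constraint in one sentence; in particular, you make explicit the invariant $|T_v|\le k_v$ and the case split on $x_v^*\lessgtr 1$, both of which the paper leaves implicit in the remark ``this is exactly why we have $(k_u-|T_u|)x_u$ in the third type of constraints.''
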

\begin{proof}
The fact that $\bar{x}$ is integral simply follows from the description
of the algorithm.

We first argue that all edges are covered. For tight edges $e\in T$,
we must have set $\bar{y}(e,u)=1$ for some $u\in U$ at some point
(line 8) so $e$ is covered by $u$. For non-tight edges $e\notin T$,
when the algorithm terminates we have
\[
\sum_{v\in e\backslash D}\bar{y}(e,v)=\sum_{v\in e\backslash D}y^{*}(e,v)=1-\sum_{v\in e\cap D}\bar{y}(e,v)
\]
so $e$ is indeed covered.

It remains to argue that the capacity constraint is satisfied. Lines
3-6 and 11-14 are clearly okay. For Lines 7-10, by Lemma \ref{lem:stayabove}
any $u\in U$ satisfies $\bar{x}_{u}\geq1$ so we may simply set $\bar{y}(e,u)=1$
and subtract 1 from the capacity $k_{u}$ in the LP. This is exactly
why we have $(k_{u}-|T_{u}|)x_{u}$ in the third type of constraints.
\end{proof}
Now we bound the approximation ratio. Our argument consists of two
ingredients. The first is to observe that an edge $e\notin T$ always
intersects $U$ as only vertices with $x_{v}\leq1/f$ is put into
$D$. The second, which is much more crucial, exploits the structure
of an extreme point solution to show that there cannot be too many
fractional $x_{w}^{*}$ left at the end of the while loop.
\begin{lem}
\label{lem:oldfeasfornew}In line 15 of the algorithm, the old $(x^{*},y^{*})$
(from before this line but restricted to only the variables appearing
in updated LP2) is still feasible for updated LP2.
\end{lem}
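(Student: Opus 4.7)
The plan is to verify that each constraint of the updated LP2 is satisfied by the old $(x^*, y^*)$ on the surviving variables. Lines 3--14 of the algorithm modify the ambient data of the LP in exactly two ways: $D$ is enlarged by $Z \cup U_=$ (lines 6 and 14), and each $T_u$ is enlarged by whatever tight edge is identified in line 8. The box constraints on $x$, the constraint $y(e,v) \leq x_v$, and nonnegativity of $y$ are simply restrictions of their old counterparts to a smaller index set and thus require no work.

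For the coverage equality on a non-tight edge $e \in E \setminus T_{\text{new}}$, the plan is to start from the old equality (applicable because $T_{\text{new}} \supseteq T_{\text{old}}$) and account for the vertices newly added to $D$. For $v \in Z$ the LP1 bound $y^*(e,v) \leq x_v^* = 0$ forces $y^*(e,v) = 0$, matching the $\bar{y}(e,v) = 0$ assigned in line 5. For $v \in U_=$, line 12 explicitly sets $\bar{y}(e,v) = y^*(e,v)$ on $\delta(v) \setminus T_{\text{new}}$. In both cases the contribution of $v$ transfers cleanly from the LHS to the RHS, preserving the equality.

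For the two capacity constraints, the key fact is that the triggering condition in line 7 guarantees $y^*(e,u) = x_u^*$ on every newly tight edge. Hence, for $u \in U_>$, dropping such an edge shrinks the LHS by exactly $x_u^*$ while the growth of $|T_u|$ shrinks the RHS by the same amount, so the inequality is preserved verbatim. For $w \in W$ the RHS $k_w x_w^*$ is untouched and the LHS only decreases as edges are removed from $\delta(w) \setminus T$. I expect the main source of friction to be the bookkeeping between $T_{\text{old}}$ and $T_{\text{new}}$, especially in interpreting line 12's range $\delta(v) \setminus T$; one must note that by the time line 12 executes, $T$ has already been updated to $T_{\text{new}}$, which is precisely what the new coverage equality demands.
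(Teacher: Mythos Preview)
Your proposal is correct and fills in exactly the kind of constraint-by-constraint verification that the paper declines to write out: the paper's entire proof is the single phrase ``Clear by inspection.'' There is no alternative approach to compare against, and your bookkeeping (transferring the $y^*(e,v)$ contribution for newly inserted $v\in Z\cup U_=$ across the coverage equality, and balancing the drop of $x_u^*$ on both sides of the $U_>$ capacity constraint when a tight edge enters $T_u$) is accurate. The only small omission is the trivial case where a newly tight edge $e$ lies in $\delta(u)$ but is placed in $T_{u'}$ for some $u'\neq u$; there the left side of $u$'s capacity constraint drops by $y^*(e,u)\ge 0$ while the right side is unchanged, so the inequality is again preserved.
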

\begin{proof}
Clear by inspection.
\end{proof}
\begin{lem}
\label{lem:intersectU}When the algorithm terminates, for any $e\notin T$
we have $\sum_{v\in e\backslash D}x_{v}^{*}\geq\sum_{v\in e\backslash D}y^{*}(e,v)\ge|e\backslash D|/f$.
\end{lem}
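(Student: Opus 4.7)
The first inequality is immediate: constraint \eqref{eq:1.2-1} of LP2 gives $y^*(e,v) \le x_v^*$ for every $v \in e \setminus D$, so summing over $v \in e \setminus D$ yields $\sum_{v \in e \setminus D} x_v^* \ge \sum_{v \in e \setminus D} y^*(e,v)$. My plan for the nontrivial part is to combine the equality constraint \eqref{eq:1'.1}, which at termination reads
\[
\sum_{v\in e\setminus D} y^*(e,v) \;=\; 1 - \sum_{v\in e\cap D}\bar y(e,v),
\]
with an upper bound on each summand $\bar y(e,v)$ on the right-hand side. If I can show that $\bar y(e,v)\le 1/f$ for every $v\in e\cap D$, then the right-hand side is at least $1 - |e\cap D|/f$, and since we are in an $f$-hypergraph, $|e\cap D| + |e\setminus D| = |e|\le f$, which rearranges precisely to $1-|e\cap D|/f \ge |e\setminus D|/f$.

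So the real work is to establish the per-vertex bound $\bar y(e,v)\le 1/f$ for $v\in e\cap D$. I would split according to how $v$ entered $D$, which happens only in Lines 3--6 or Lines 11--14 of the algorithm. If $v$ was added as a member of $Z$, then by construction $\bar y(e,v)=0\le 1/f$, so this case is trivial. If $v$ was added as a member of $U_=$, then at the moment of insertion the algorithm set $\bar y(e,v)=y^*(e,v)$ where $(x^*,y^*)$ was the LP solution at that iteration; at that moment $x_v^* = 1/f$ (because $v\in U_=$), and the LP constraint \eqref{eq:1.2-1} forces $y^*(e,v)\le x_v^* = 1/f$, giving $\bar y(e,v)\le 1/f$ as needed. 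Here I am implicitly using that $e\notin T$, so the assignment in Line 8 (which sets $\bar y(e,u)=1$ for tight edges) does not apply.

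One point I want to double-check as I write this up is that constraint \eqref{eq:1'.1} really is available at termination, i.e.\ the values $\bar y(e,v)$ appearing in its right-hand side are exactly the ones produced by the algorithm for $v\in e\cap D$. This is true because each time a vertex is inserted into $D$, $\bar y(e,v)$ is fixed once and for all, and the updated LP2 solved immediately afterward (Line 15) uses this frozen value on the right-hand side of \eqref{eq:1'.1}. The main obstacle is therefore not the estimate itself but keeping the bookkeeping straight between iterations; the actual inequality chain, once the $1/f$ bound on frozen $\bar y(e,v)$ is in hand, is a one-line rearrangement using $|e|\le f$.
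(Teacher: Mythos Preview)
Your proposal is correct and follows essentially the same approach as the paper: both argue that any $v\in D$ was inserted with $x_v^*\in\{0,1/f\}$ at that moment, hence $\bar y(e,v)\le 1/f$, and then combine constraint~\eqref{eq:1'.1} with $|e|\le f$ to obtain the chain of inequalities. Your write-up is simply more explicit about the case split between $Z$ and $U_=$ and about why the tight-edge assignment in Line~8 is irrelevant for $e\notin T$.
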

\begin{proof}
Note that vertices are assigned to $D$ in Lines 3-6 and 11-14, where
we have $x_{v}^{*}\in\{0,1/f\}$ (note that this $x_{v}^{*}$ is the
one from that particular iteration of the algorithm and not necessarily
the final one). Therefore $\bar{y}(e,v)=y^{*}(e,v)\leq x_{v}^{*}\leq1/f$,
which implies 
\[
\sum_{v\in e\backslash D}x_{v}^{*}\geq\sum_{v\in e\backslash D}y^{*}(e,v)=1-\sum_{v\in e\cap D}\bar{y}(e,v)\ge1-|e\cap D|/f\geq|e\backslash D|/f,
\]
where the last inequality follows from $|e|\leq f$.
\end{proof}
Now we show that there cannot be too many elements in $W$ by a simple
counting argument based on examining the extreme point.
\begin{lem}
\label{lem:extpt}When the algorithm terminates, $|W|\leq|U^{=}|$
where $U^{=}:=\{u\in U:x_{u}^{*}=m_{u}\}$.
\end{lem}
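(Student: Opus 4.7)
The plan is an extreme-point rank count on LP2 at the moment the algorithm terminates. At termination $V' := V \setminus D = U_{>} \cup W$ (since $U_{=} = Z = \emptyset$), and the while-loop condition ensures that no $y^*(e,u) = x_u^*$ is tight for $u \in U$. Let $S := \{(e,v) : y^*(e,v) > 0\}$ be the support of $y^*$, $E^\bullet := \{e \in E\setminus T : \exists v,\,(e,v)\in S\}$, and $d_v^+ := |\{e : (e,v)\in S\}|$. Passing to the LP obtained by fixing $y(e,v) = 0$ whenever $y^*(e,v) = 0$ (and discarding the corresponding tight $y \geq 0$ constraints together with their variables), there are $|V'| + |S|$ remaining variables, and extremality of $(x^*, y^*)$ forces the rank of the tight constraints to equal this number.

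I partition the tight constraints into the edge-coverage equalities~\eqref{eq:1'.1} (at most $|E^\bullet|$ of them, ``global'') and the ``local'' constraints attached to each single vertex $v$: tight $y(e,v) = x_v$, the tight capacity, and $x_u = m_u$ if $u \in U^{=}$. By rank subadditivity $|V'| + |S| \leq |E^\bullet| + \sum_v r_v$, where $r_v$ is the rank of the local system at $v$. For $u \in U_{>} \setminus U^{=}$ the termination condition kills any tight $y(e,u) = x_u$ and no box constraint on $x_u$ is tight, so only the capacity can be locally tight and $r_u \leq 1$; for $u \in U^{=}$ we additionally get $x_u = m_u$, so $r_u \leq 2$.

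The crux is the bound $r_w \leq d_w^+$ for $w \in W$. Every local constraint at $w$ -- each tight $y(e,w) - x_w = 0$ and the capacity $\sum_e y(e,w) - k_w x_w = 0$ (if tight) -- is homogeneous in the $1 + d_w^+$ variables $(x_w, \{y(e,w)\}_{(e,w)\in S})$, and because $x_w^* > 0$ the nonzero vector $(x_w^*, \{y^*(e,w)\})$ satisfies all of them. Hence it lies in the null space of the local constraint matrix, forcing $r_w \leq (1 + d_w^+) - 1 = d_w^+$. Plugging in and simplifying with $|S| = \sum_v d_v^+$ and $|V'| = |W| + |U_{>}|$ yields
\[
|W| + \sum_{u \in U_{>}} d_u^+ \leq |E^\bullet| + |U^{=}|.
\]

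Finally, I show $\sum_{u \in U_{>}} d_u^+ \geq |E^\bullet|$ using Lemma~\ref{lem:intersectU}: if some $e \in E^\bullet$ had its support entirely inside $W$, then each supporting $v$ has $y^*(e,v) \leq x_v^* < 1/f$, so $\sum_{v \in e \setminus D} y^*(e,v) < d_e / f \leq |e \setminus D|/f$, contradicting Lemma~\ref{lem:intersectU}. Therefore each $e \in E^\bullet$ contributes at least one pair $(e,u) \in S$ with $u \in U_{>}$, and the displayed inequality collapses to $|W| \leq |U^{=}|$. I expect the main obstacle to be the homogeneity observation that yields $r_w \leq d_w^+$: without it, the naive bound $r_w \leq (\text{number of locally tight constraints})$ is too weak for the rank count to close, and one must notice that because $x_w^* \in (0, 1/f)$ strictly, no constraint on $w$ takes the affine form $x_w = c$.
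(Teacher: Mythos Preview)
Your proof is correct and follows essentially the same extreme-point rank-counting approach as the paper. The only differences are organizational: you first restrict to the support of $y^*$ and group the remaining tight constraints by vertex (using rank subadditivity), whereas the paper keeps all $y$-variables and counts tight constraints by type; both arguments rest on the same homogeneity/singularity observation at each $w\in W$ (your null-space argument is exactly the paper's ``setting all $1+|\delta(w)\setminus T|$ constraints tight would be singular'') and on Lemma~\ref{lem:intersectU} to guarantee each surviving edge meets $U_{>}$.
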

\begin{proof}
As an extreme point solution, $(x^{*},y^{*})$ is obtained by setting
some of the constraints as equalities. We call these constraints,
which form an \textbf{invertible} matrix, \textbf{tight}. The proof
is based on examining the structure and number of these tight constraints.

First note that $0\leq x_{w}\leq1/f$ and $1/f\leq x_{u}$ cannot
be tight since $U_{=}=Z=\emptyset$. Similarly, $y(e,u)\leq x_{u}$
for $u\in U$ cannot be tight since no more edges can be added to
$T$. Furthermore, we disregard any edge $e\subseteq D$ since all
of its $y(e,v)$ has been determined. 

Observe that by the last lemma each edge $e\notin T$ (with $e\backslash D\neq\emptyset$)
must have an $x_{v}^{*}\ge y^{*}(e,v)\ge1/f$ . In other words, $1+|e\cap W|\leq|e\backslash D|$.

Below we count the number of tight constraints of different types.
The first, second and third in the table are self-explanatory.

For the fourth one in the table there is a total of $\sum_{e\notin T}|e\cap W|+|W|$
such constraints but we claim that only $\sum_{e\notin T}|e\cap W|$
can be tight. This follows from the fact that for each $w\in W$,
setting all of the $1+|\delta(w)\backslash T|$ corresponding constraints
tight would give a singular system. More precisely, this would give
$x_{w}=y(e,w)=0$ which renders setting the constraint $\sum_{e\in\delta(w)\backslash T}y(e,w)\leq k_{w}x_{w}$
tight unnecessary. In other words, including all of these constraints
would give rise to a singular matrix. Thus the total number is at
most $\sum_{w}|\delta(w)\backslash T|=\sum_{e\notin T}|e\cap W|$.

For the fifth one, by Lemma \ref{lem:intersectU} each edge $e\notin T$
(with $e\backslash D\neq\emptyset$) must satisfy $x_{v}^{*}\ge y^{*}(e,v)\ge1/f$
for some $v\in e$. Therefore at least one of the $|e\cap U_{>}|$
constraints $y(e,u)\geq0$ is not tight.

\begin{tabular}{|c|c|}
\hline 
constraints & \#tight ones\tabularnewline
\hline 
\hline 
$\sum_{v\in e\backslash D}y(e,v)=1-\sum_{v\in e\cap D}\bar{y}(e,v)$ & $=|E\backslash T|$\tabularnewline
\hline 
$\sum_{e\in\delta(u)\backslash T}y(e,u)\leq(k_{u}-|T_{u}|)x_{u}$ & $\leq|U_{>}|$\tabularnewline
\hline 
$x_{u}\leq m_{u}$ & $=|U^{=}|$\tabularnewline
\hline 
$y(e,w)\leq x_{w}$, $y(e,w)\geq0$ and $\sum_{e\in\delta(w)\backslash T}y(e,w)\leq k_{w}x_{w}$
($w\in W$) & $\leq\sum_{e\notin T}|e\cap W|$\tabularnewline
\hline 
$y(e,u)\geq0$ ($u\in U_{>}$)  & $\leq\sum_{e\notin T}|e\cap U_{>}|-1$\tabularnewline
\hline 
\end{tabular}

Now the number of tight constraints is at most 
\begin{eqnarray*}
|E\backslash T|+|U_{>}|+|U^{=}|+\sum_{e\notin T}\left(|e\cap W|+|e\cap U_{>}|-1\right) & = & |E\backslash T|+|U_{>}|+|U^{=}|+\sum_{e\notin T}\left(|e\backslash D|-1\right)\\
 & = & \sum_{e\notin T}|e\backslash D|+|U_{>}|+|U^{=}|.
\end{eqnarray*}

On the other hand, the number of variables is $|U_{>}|+|W|+\sum_{e\notin T}|e\backslash D|$.
Since there is an equal number of tight constraints and variables,
we have $|W|\leq|U^{=}|$.
\end{proof}
We are ready to derive our main theorem.
\begin{thm}
\label{thm:main}Our algorithm is a $f$-approximation.
\end{thm}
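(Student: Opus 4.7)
The plan is to bound $\sum_v \bar{x}_v$ by $f$ times the LP1 optimum $V^{(0)}$, which lower-bounds OPT. I decompose the total integer cost into two additive pieces: the contribution $\sum_t |U_{=}^{(t)}|$ from vertices fixed to $\bar{x}_u = 1$ in Lines 11--14 during iteration $t$ of the loop, and the contribution $\sum_{v \notin D}\lceil x_v^*\rceil$ from the post-loop rounding step. Vertices handled via the $Z$ branch contribute $0$, and tight-edge coverings do not change $\bar{x}$.

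For the iterative piece, I invoke Lemma~\ref{lem:oldfeasfornew} to argue that the LP objective $V^{(t)}$ on the remaining variables is monotonically non-increasing across iterations. Specifically, in going from iteration $t$ to $t+1$, the vertices removed from the LP consist of $Z^{(t)}$ (whose total $x^*$ mass is $0$) and $U_{=}^{(t)}$ (whose total $x^*$ mass is exactly $|U_{=}^{(t)}|/f$, since each $x_u^{*(t)} = 1/f$). Since the restricted old solution remains feasible for the new LP, $V^{(t+1)} \le V^{(t)} - |U_{=}^{(t)}|/f$, and telescoping yields $\sum_t |U_{=}^{(t)}| \le f\bigl(V^{(0)} - V^{(\mathrm{end})}\bigr)$, where $V^{(\mathrm{end})}$ denotes the LP value remaining when the loop terminates.

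For the post-loop piece, the remaining vertices lie in $U_> \cup W$. For $u \in U_>$, Lemma~\ref{lem:stayabove} ensures $x_u^* \ge 1/f$, and a short case analysis (splitting on $\lfloor x_u^*\rfloor$) shows $\lceil x_u^* \rceil \le f x_u^*$ for $f \ge 2$. For $w \in W$ we have $\lceil x_w^* \rceil = 1 > f x_w^*$, and this off-by-one excess of $|W|$ is absorbed via Lemma~\ref{lem:extpt}: each $u \in U^=$ has $x_u^* = m_u$ integer so $\lceil x_u^* \rceil = x_u^*$, leaving $(f-1)x_u^*$ of slack per vertex in $U^=$. Since $\sum_{u \in U^=} x_u^* \ge |U^=| \ge |W|$ by Lemma~\ref{lem:extpt} and $f-1 \ge 1$, this slack dominates $|W|$. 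Combining the three groups $U^=$, $U_> \setminus U^=$, and $W$ gives $\sum_{v \notin D}\lceil x_v^* \rceil \le f\sum_{u \in U_>} x_u^* \le f V^{(\mathrm{end})}$, and adding the iterative bound yields $\sum_v \bar{x}_v \le f V^{(0)}$ as desired. The main subtlety is precisely this $|W|$ off-by-one: it is the extreme-point counting in Lemma~\ref{lem:extpt}, rather than any per-vertex rounding argument, that keeps the ratio at exactly $f$.
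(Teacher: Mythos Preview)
Your proof is correct and follows essentially the same approach as the paper: the telescoping via Lemma~\ref{lem:oldfeasfornew} to charge each $|U_=^{(t)}|$ against $|U_=^{(t)}|/f$, followed by bounding the final rounding cost using $\lceil x_u^*\rceil \le f x_u^*$ for $u\in U_>$ and absorbing the $|W|$ excess into the $(f-1)m_u$ slack at $U^=$ via Lemma~\ref{lem:extpt}, is exactly what the paper does. The paper also explicitly cites Lemmas~\ref{lem:feas} and~\ref{b-matching} for feasibility, which you might add for completeness.
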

\begin{proof}
Feasibility follows from Lemmas \ref{lem:feas} and \ref{b-matching}.
We bound the approximation ratio. By Lemma \ref{lem:oldfeasfornew}
the old $(x_{old}^{*},y_{old}^{*})$ is still feasible for the updated
LP so the objective value of the new $(x_{new}^{*},y_{new}^{*})$
is no worse: 
\[
\text{cost}(x_{new}^{*},y_{new}^{*})=\sum_{v\in V\backslash D_{new}}x_{new,v}^{*}\leq\sum_{v\in V\backslash D_{new}}x_{old,v}^{*}=\text{cost}(x_{old}^{*},y_{old}^{*})-\sum_{v\in U_{old,=}\cup Z}x_{old,v}^{*}
\]
which says that the cost $|U_{old,=}|$ incurred to round up vertex
in $U_{old,=}$ can be charged to $\sum_{v\in U_{old,=}\cup Z}x_{old,v}^{*}=|U_{old,=}|/f$
with a factor $f$ blowup.

For the last $(x_{last}^{*},y_{last}^{*})$, the cost of rounding
up the remaining $\bar{x}_{v}=\lceil x_{last,v}^{*}\rceil$ is
\begin{eqnarray*}
|W|+\sum_{u\in U^{=}}m_{u}+\sum_{u\in U\backslash U^{=}}\lceil x_{last,u}^{*}\rceil & \leq & |U^{=}|+\sum_{u\in U^{=}}m_{u}+\sum_{u\in U\backslash U^{=}}\lceil x_{last,u}^{*}\rceil\\
 & \leq & f\left(\sum_{u\in U^{=}}m_{u}+\sum_{u\in U\backslash U^{=}}x_{last,u}^{*}\right)\\
 & = & f\cdot\text{cost}(x_{last}^{*},y_{last}^{*}),
\end{eqnarray*}
where we used Lemma \ref{lem:extpt} and $m_{u}\ge1,f\geq2,x_{last,u}^{*}\geq1/f$.
This proves the theorem.
\end{proof}

\subsection{Implementation using only the original LP}

It is possible to implement a similar algorithm without appealing
to LP2, which essentially introduces the new constraint $x_{u}\geq1/f$.
Without them previous $u\in U$ may fall into $W$ and $u$ may not
be selected in the final solution, in which case setting $\bar{y}(e,u)=1$
for $y^{*}(e,u)=x_{u}^{*}$ is not justified as $u$ does not have
the capacity to cover $e$.

The key idea is to continuously move from the old optimum to the new
one. More concretely, consider moving along the line from $(x_{old}^{*},y_{old}^{*})$
to $(x_{new}^{*},y_{new}^{*})$. We stop whenever $x_{v}^{*}=1/f$
at an intermediate point $(x^{*},y^{*})$, where we perform lines
11-14 by fixing $x_{v}=1$ and $y(e,u)=y^{*}(e,u)$, and solve the
updated LP again. If we arrive at $(x_{new}^{*},y_{new}^{*})$ without
stopping, then we perform lines 7-10 by covering any tight edge $y^{*}(e,u)=x_{u}^{*}\geq1/f$
using $u$, i.e. removing $e$ from the LP and decreasing $k_{u}$
by 1; and solve the updated LP again (one can also eliminate $Z$
which, as with the previous approach, is only introduced to simplify
the exposition). If none of these operations are possible, then we
are at an extreme point where the same proof would show that rounding
up the remaining vertices would give a $f$-approximation.

Readers can easily make the previous proof work for this new implementation.
We refrain from giving a proof because in the next section, we would
give a faster implementation using this idea on another LP with a
full proof.

\section{Faster Implementation via Solving Covering LPs}

The approach in the previous section essentially redistributes the
coverage relation $y(e,v)$ and cost $x_{v}$ from one iteration to
the next. Upon a closer examination, readers may have noticed that
the constraint $y(e,w)\leq x_{w}$ for $w\in W$ plays a relatively
smaller role. One may wonder if there could a faster implementation
without solving LP2 again. We answer this in the affirmative in this
section. This alternate approach performs iterative rounding on LP3,
which is a packing LP and can be solved faster using various specialized
algorithms. The formulation of LP3 has been heavily inspired by a
similar construction in \cite{cheung2014improved}.

\subsection{LP relaxation and algorithm}

In this section we use the same notations $U,U_{>},U_{=},W,Z$ as
before. Our covering \textbf{LP3} is as follows:

\begin{subequations}\label{LP2-2} 
\begin{alignat}{2}
\text{min } & \qquad\qquad\qquad\qquad\qquad\sum_{w\in W}x_{w}+\sum_{u\in U_{>}}x_{u}\nonumber \\
\text{s.t. } & \sum_{w\in W}M(u,w)x_{w}+(k_{u}-|T_{u}|)x_{u}\geq\sum_{w\in W}M(u,w)x_{w}^{*}+\sum_{e\in\delta(u)\backslash T}y^{*}(e,u) & \quad & \forall u\in U_{>}\label{eq:2.1-2}\\
 & \qquad\qquad\qquad\qquad\qquad0\leq x_{w}\leq1/f & \quad & \forall w\in W\label{eq:2.2-2}\\
 & \qquad\qquad\qquad\qquad\qquad1/f\leq x_{u}\leq m_{u} & \quad & \forall u\in U_{>}\label{eq:2.3-2}
\end{alignat}

\end{subequations}where
\[
M(u,w)=\sum_{e\in E_{u}\cap\delta(w)}\frac{y^{*}(e,w)}{x_{w}^{*}}.
\]

\paragraph{Redistributing coverage}

At the high level, LP3 is based on doing book-keeping of how edges
$\delta(u)$ incident to $u\in U$ are covered. It attempts to redistribute
the coverage by maintaining the proportion of capacity used for different
$y(e,w)$'s (for a given $w$). While an edge $e$ can have more than
one endpoint in $U$, we simply arbitrarily assign $e$ to one such
$u$ so that $E_{u}$ is the collection of edges assigned to $u$.

For $w\in W$ we distribute its capacity in the same proportion as
before (hence step 4(a) and the definition of $M(u,w)$; see also
Lemma \ref{lem:capFeas2}). The coverage $y(e,u)$ for $e\in E_{u}$
would then be the remaining amount not yet covered. However this amount
can become negative if the other endpoints of $e$ in $W$ contribute
more than before towards covering $e$. Similarly, if they contribute
much less now $y(e,u)$ can become larger than $x_{u}$.

\textbf{The key idea is to slowly move from an old solution $(x^{*},y^{*})$
to the new $(x_{new}^{*},y_{new}^{*})$, and stop whenever any of
these desired conditions is about to fail (step 5).} At this point
we may simplify the solution by appropriately modifying $x,y$.

The algorithm is as follows. Readers should recognize the resemblance
to the previous one, except with the notable difference that we need
to ensure $y(e,u)\geq0$ in addition to $x_{u}\geq1/f$ and $y(e,u)\leq x_{u}$
(pardon us for not using the algorithm environment here as it would
clutter the longer description).

\begin{center}
\noindent\fbox{\begin{minipage}[t]{1\columnwidth - 2\fboxsep - 2\fboxrule}%
\begin{center}
\textbf{Iterative Rounding Algorithm using Covering LPs}
\par\end{center}
\begin{enumerate}
\item Solve \textbf{LP1} for an extreme point solution $(x^{*},y^{*})$.
Initially $T_{u}=\emptyset$ and $D=\emptyset$; initialize $U,U_{>},U_{=},W,Z$
based on $(x^{*},y^{*})$.
\item For $v\in Z$, set $\bar{x}_{v}=x_{v}^{*}=0$, $\bar{y}(e,v)=y^{*}(e,v)=0$
for $e\in\delta(v)$ and $D\longleftarrow D\cup\{v\}$.
\item Partition edges into a disjoint union $E\backslash T=\bigcup_{u\in U}E_{u}$
by assigning $e\notin T$ to an arbitrary $E_{u}$ where $u\in e\cap U$.
\item Solve updated \textbf{LP3} for an extreme point solution $x_{new}^{*}$
and set 

\begin{enumerate}
\item $y_{new}^{*}(e,w)=\frac{y^{*}(e,w)}{x_{w}^{*}}x_{new,w}^{*}$ for
$w\in W,e\in\delta(w)\backslash T$ 
\item $y_{new}^{*}(e,u)=y^{*}(e,u)+\sum_{w\in e\cap W}\left(y^{*}(e,w)-y_{new}^{*}(e,w)\right)$
for $u\in U,e\in E_{u}$ 
\item $y_{new}^{*}(e,u)=y^{*}(e,u)$ for $u\in U,e\notin E_{u}\cup T$
\end{enumerate}
\item Let $x^{t}=(1-t)x^{*}+tx_{new}^{*}$ and $y^{t}=(1-t)y^{*}+ty_{new}^{*}$,
where $t$ \textbf{continuously} increases from 0 to 1. \textbf{Stop}
whenever:

\begin{enumerate}
\item $y^{t}(e,u)=x_{u}^{t}$ for $u\in U$ and $e\in\delta(u)\backslash T$.
Set $\bar{y}(e,u)=1$, $\bar{y}(e,v)=0$ for $v\in e\backslash\{u\}$
and $T_{u}\longleftarrow T_{u}\cup\{e\}$.
\item $x_{v}^{t}=1/f$ for some $v$. Set $\bar{x}_{u}=1$, $\bar{y}(e,u)=y^{*}(e,u)$
for $e\in\delta(u)\backslash T$ and $D\longleftarrow D\cup\{u\}$.
\item $y^{t}(e,u)=0$ for $u\in U$ and $e\in\delta(u)\backslash T$. Set
$\bar{y}(e,u)=0$ and $e\longleftarrow e\backslash\{u\}$.
\end{enumerate}
\item Update $U,U_{>},U_{=},W,Z$ based on $(x^{*},y^{*})\longleftarrow(x^{t},y^{t})$.
\item \textbf{Repeat} steps 2-7 until no more updates are possible (reaching
$x_{new}^{*}$ without stopping).
\item For $v\notin D$, set $\bar{x}_{v}=\lceil x_{v}^{*}\rceil$ and $\bar{y}(e,v)=y^{*}(e,v)$
for $e\in\delta(v)\backslash T$.
\end{enumerate}
\end{minipage}}
\par\end{center}

First we show that our algorithm is well-defined. Interestingly, this
is analogous to Lemma \ref{lem:intersectU} which is used to establish
approximation guarantee instead.
\begin{lem}
\label{lem:interU2}We have $\sum_{v\in e\backslash D}x_{v}^{*}\geq\sum_{v\in e\backslash D}y^{*}(e,v)\ge|e\backslash D|/f$.
In particular, there is some $u\in e\cap U$ so step 3 of the algorithm
is well-defined.
\end{lem}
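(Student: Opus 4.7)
The plan is to mirror the proof of Lemma \ref{lem:intersectU}, but maintain the relevant invariants across iterations of the covering-LP algorithm. Vertices get inserted into $D$ in only two places: step 2 (where $x_v^*=0$) and step 5(b) (where $x_v^t=1/f$). At the moment of insertion we record $\bar y(e,v)=y^*(e,v)$, and by the LP constraint $y^*(e,v)\le x_v^*\le 1/f$ we obtain the bound $\bar y(e,v)\le 1/f$ for every $v\in D$. This is the exact analogue of the observation used in Lemma \ref{lem:intersectU}, just with $1/f$ arising now from the stopping rule 5(b) rather than from the LP2 constraint $x_w\le 1/f$.

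Next I would establish, as an inductive invariant, that throughout the algorithm one has $\sum_{v\in e\setminus D} y^*(e,v)+\sum_{v\in e\cap D}\bar y(e,v)=1$ for every $e\notin T$, together with $y^*(e,v)\le x_v^*$. The base case is the LP1 optimum, where both hold. For the inductive step I check each update: step 4 constructs $y_{new}^*$ so that, for each $e\in E_u$, $y_{new}^*(e,u)+\sum_{w\in e\cap W}y_{new}^*(e,w)=y^*(e,u)+\sum_{w\in e\cap W}y^*(e,w)$ (by the definition of step 4(b)), and by step 4(c) the contributions of the other $u'\in e\cap U$ are unchanged; hence $\sum_v y_{new}^*(e,v)$ is preserved. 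Linear interpolation $(x^t,y^t)$ then also preserves the equality and, since it is a convex combination of two feasible points, preserves $y^t\le x^t$ and $y^t\ge 0$ up until one of the stopping events in step 5 occurs. The stopping events are designed precisely so these inequalities are never violated: 5(a) removes the edge into $T$ the moment $y^t=x^t$ becomes tight, 5(b) fixes a vertex the moment it reaches $1/f$, and 5(c) drops $u$ from $e$ the moment $y^t(e,u)=0$, keeping the coverage sum intact.

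With the invariants in hand, the argument is then the same as in Lemma \ref{lem:intersectU}. From $\bar y(e,v)\le 1/f$ for $v\in e\cap D$ and $|e|\le f$, I get
\[
\sum_{v\in e\setminus D} y^*(e,v)\;=\;1-\sum_{v\in e\cap D}\bar y(e,v)\;\ge\;1-\frac{|e\cap D|}{f}\;\ge\;\frac{|e\setminus D|}{f},
\]
and combining with $y^*(e,v)\le x_v^*$ gives the full chain $\sum_{v\in e\setminus D} x_v^*\ge\sum_{v\in e\setminus D} y^*(e,v)\ge |e\setminus D|/f$. Averaging, some $v\in e\setminus D$ must satisfy $x_v^*\ge 1/f$, so $v\in U$ and step 3's assignment of $e$ to some $E_u$ with $u\in e\cap U$ is well-defined.

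I expect the only real obstacle to be verifying cleanly that the invariant $\sum_{v\in e\setminus D} y^*(e,v)+\sum_{v\in e\cap D}\bar y(e,v)=1$ is preserved by step 4's redistribution and by the shrinking of $e$ in step 5(c); once that bookkeeping is done the inequality chain is immediate, exactly paralleling Lemma \ref{lem:intersectU}.
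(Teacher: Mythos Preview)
Your proposal is correct and takes essentially the same approach as the paper: identify that vertices enter $D$ only in steps 2 and 5(b) with $x_v^*\le 1/f$, deduce $\bar y(e,v)\le 1/f$ for $v\in e\cap D$, and then run the same inequality chain as in Lemma \ref{lem:intersectU}. The paper's proof is a terse three lines that simply asserts the equality $\sum_{v\in e\setminus D}y^*(e,v)=1-\sum_{v\in e\cap D}\bar y(e,v)$ and the bound $y^*(e,v)\le x_v^*$, whereas you explicitly set up and inductively verify these as invariants preserved by steps 4 and 5; your version is more careful but not a different argument.
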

\begin{proof}
Note that vertices are assigned to $D$ in steps 2 and 5(b), where
we have $x_{v}^{*}\leq1/f$. Therefore $\bar{y}(e,v)=y^{*}(e,v)\leq x_{v}^{*}\leq1/f$,
which implies 
\[
\sum_{v\in e\backslash D}x_{v}^{*}\geq\sum_{v\in e\backslash D}y^{*}(e,v)=1-\sum_{v\in e\cap D}\bar{y}(e,v)\ge1-|e\cap D|/f\geq|e\backslash D|/f,
\]
where the last inequality follows from $|e|\leq f$.
\end{proof}
Now we argue for feasibility. As before, any $x_{u}^{*}\geq1/f$ would
stay above $1/f$ which is necessary to cover tight edges in step
5(a).
\begin{lem}
\label{lem:stayabove2}Suppose $u$ satisfies $x_{u}^{*}\geq1/f$
at some time during the execution of the algorithm. We must then have
$x_{u}^{*}\geq1/f$ after so long as $u\notin D$. Moreover, in the
final solution $\bar{x}_{u}\geq1$.
\end{lem}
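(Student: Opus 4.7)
The plan is to mirror the proof of Lemma \ref{lem:stayabove}, but accounting for the continuous movement along the segment $x^t=(1-t)x^*+tx_{new}^*$ introduced in step 5. The key observation is that the constraint $1/f \le x_u$ for $u \in U_>$ is still present in LP3 (constraint \ref{eq:2.3-2}), so the new optimum preserves the $1/f$-lower bound; convexity then propagates this bound along the entire segment.

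Concretely, I would split into cases based on the value of $x_u^*$ at the first iteration in which $x_u^* \ge 1/f$. If $x_u^*=1/f$ exactly, then $u \in U_=$ and the very start of the continuous movement ($t=0$) already satisfies the trigger $x_u^t=1/f$ of step 5(b); the algorithm immediately sets $\bar{x}_u=1$ and places $u$ into $D$, which gives the conclusion. If instead $x_u^*>1/f$, then $u \in U_>$, and I would argue inductively over iterations that $x_u^*$ never drops below $1/f$ as long as $u \notin D$. In each iteration, constraint \ref{eq:2.3-2} forces $x_{new,u}^* \ge 1/f$, and by convexity $x_u^t=(1-t)x_u^*+tx_{new,u}^* \ge 1/f$ throughout $t \in [0,1]$. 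If at some $t$ we have $x_u^t=1/f$ exactly, then step 5(b) fires (with $v=u$), sets $\bar{x}_u=1$, and moves $u$ into $D$; otherwise $u$ continues to satisfy $x_u^*>1/f$ after the update in step 6.

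Finally, to deduce $\bar{x}_u \ge 1$ in the final solution, I would consider the two terminal scenarios: either $u$ entered $D$ at some step 5(b), in which case $\bar{x}_u=1$ by construction; or $u \notin D$ when the algorithm terminates, in which case step 8 assigns $\bar{x}_u=\lceil x_u^*\rceil$, and since $x_u^* \ge 1/f>0$ we get $\lceil x_u^*\rceil \ge 1$.

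I do not expect a real obstacle here: the lemma is essentially a structural invariant of the algorithm, and the only new ingredient relative to Lemma \ref{lem:stayabove} is that the $1/f$-lower bound must be maintained not just across discrete LP re-solves but along the continuous interpolation in step 5. The convexity argument handles this cleanly, and the step 5(b) stopping rule is precisely designed so that any violation is intercepted before it can occur.
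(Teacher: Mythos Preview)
Your proof is correct and follows essentially the same approach as the paper; the paper's own argument is a one-liner that points only to step~5(b) (continuity of $t\mapsto x_u^t$ guarantees the trajectory hits $1/f$ before it could drop below, at which point $u$ is placed in $D$ with $\bar{x}_u=1$), so your appeal to constraint~\eqref{eq:2.3-2} and the convexity of $x^t$ is valid but, strictly speaking, redundant.
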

\begin{proof}
Any such $x_{u}^{*}$ cannot drop below $1/f$ thanks to step 5(b)
of the algorithm, and would be rounded up eventually.
\end{proof}
We justify the capacity constraint in the next lemma, which goes hand-in-hand
with the way \textbf{LP3} is formulated.
\begin{lem}
\label{lem:capFeas2}$(x^{*},y^{*}),(x_{new}^{*},y_{new}^{*}),(x^{t},y^{t})$
satisfy the capacity constraint $\sum_{e\in\delta(w)\backslash T}y(e,w)\leq k_{w}x_{w}$
for $w\in W$ and $\sum_{e\in\delta(u)\backslash T}y(e,u)\leq(k_{u}-|T_{u}|)x_{u}$
for $u\in U$.
\end{lem}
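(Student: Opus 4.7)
The plan is to verify the capacity constraints for each of the three solutions separately, and to exploit the fact that $(x^t, y^t)$ is a convex combination of $(x^*, y^*)$ and $(x_{new}^*, y_{new}^*)$: since the capacity inequalities are linear, establishing them at the two endpoints automatically yields them along the whole segment. The overall structure is inductive on the iterations of the main loop: I would maintain as an invariant that the current $(x^*, y^*)$ satisfies the capacity constraints, show that $(x_{new}^*, y_{new}^*)$ and hence $(x^t, y^t)$ inherit them, and then verify that the update in step 6 carries the invariant into the next iteration.

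For the base case, $(x^*, y^*)$ comes from LP1 with $T_u = \emptyset$, so the required inequalities reduce to LP1's constraint (\ref{eq:1.3}). For the inductive update, the only non-trivial event during step 5 is step 5(a), which adds a newly tight edge $e$ to $T_u$; here both sides of $\sum_{e \in \delta(u) \backslash T} y(e,u) \leq (k_u - |T_u|) x_u$ drop by exactly $x_u^t$ (the LHS loses the tight term $y^t(e,u) = x_u^t$ and the RHS loses the $x_u^t$ freed by incrementing $|T_u|$), so the invariant is preserved. Steps 5(b) and 5(c) only remove variables or set $y(e,u) = 0$, so they trivially cannot violate the constraints.

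The heart of the proof is verifying the constraints for $(x_{new}^*, y_{new}^*)$. For $w \in W$, step 4(a) gives $y_{new}^*(e,w) = \frac{y^*(e,w)}{x_w^*} x_{new,w}^*$, so $\sum_{e \in \delta(w) \backslash T} y_{new}^*(e,w) = \frac{x_{new,w}^*}{x_w^*} \sum_{e \in \delta(w) \backslash T} y^*(e,w)$, and the capacity for $w$ follows in one line from the inductive bound on $(x^*, y^*)$. For $u \in U_{>}$, I would expand step 4(b), regroup the double sum $\sum_{e \in E_u} \sum_{w \in e \cap W} (y^*(e,w) - y_{new}^*(e,w))$ by swapping the order of summation to $\sum_{w \in W} \sum_{e \in E_u \cap \delta(w)}$, and substitute both the formula for $y_{new}^*(e,w)$ and the definition of $M(u,w)$. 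This rewrites $\sum_{e \in \delta(u) \backslash T} y_{new}^*(e,u)$ as $\sum_{e \in \delta(u) \backslash T} y^*(e,u) + \sum_{w \in W} M(u,w)(x_w^* - x_{new,w}^*)$, at which point the desired inequality is precisely a rearrangement of constraint (\ref{eq:2.1-2}) of LP3.

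The main obstacle is the bookkeeping for $u \in U_{>}$: one must recognize that $\sum_{e \in \delta(u) \backslash T} y_{new}^*(e,u)$ cleanly decomposes into a baseline term $\sum y^*(e,u)$ plus the net coverage transferred through each $w \in W$, and that LP3 has been engineered so that constraint (\ref{eq:2.1-2}) bounds exactly this quantity. Once this correspondence is spotted the remaining computation is mechanical, and every other piece of the argument (the $w$-case, the convex-combination step, and the update step 5(a)) follows routinely.
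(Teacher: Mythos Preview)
Your proposal is correct and follows essentially the same approach as the paper's own proof: induction over iterations, the one-line scaling argument for $w\in W$, the expansion of step 4(b) with the swap of summation to recognize constraint (\ref{eq:2.1-2}) of LP3 for $u\in U$, the convex-combination step for $(x^t,y^t)$, and the observation that adding a tight edge to $T_u$ decreases both sides of the $u$-constraint by $x_u^t$. If anything, your write-up is slightly more explicit about the base case and the step-5 bookkeeping than the paper's.
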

\begin{proof}
We proceed by induction.

For $w\in W$, we have $\sum_{e\in\delta(w)\backslash T}y_{new}^{*}(e,w)=\sum_{e\in\delta(u)\backslash T}\frac{y^{*}(e,w)}{x_{w}^{*}}x_{new,w}^{*}\leq k_{w}x_{new,w}^{*}$
since by the induction hypothesis we have $\sum_{e\in\delta(w)\backslash T}y_{new}^{*}(e,w)\leq k_{w}x_{w}^{*}$.
Now $(x^{t},y^{t})$ (and therefore the new $(x^{*},y^{*})$) also
satisfy the capacity constraint since it is a convex combination of
$(x^{*},y^{*}),(x_{new}^{*},y_{new}^{*})$.

For $u\in U$, we have 
\begin{eqnarray*}
\sum_{e\in\delta(u)\backslash T}y_{new}^{*}(e,u) & = & \sum_{e\in E_{u}}\left[y^{*}(e,u)+\sum_{w\in e\cap W}\left(y^{*}(e,w)-y_{new}^{*}(e,w)\right)\right]+\sum_{e\notin E_{u}\cup T}y^{*}(e,u)\\
 & = & \sum_{e\in\delta(u)\backslash T}y^{*}(e,u)+\sum_{w\in e\cap W}\left(y^{*}(e,w)-\frac{y^{*}(e,w)}{x_{w}^{*}}x_{new,w}^{*}\right)\\
 & \leq & (k_{u}-|T_{u}|)x_{new,u}^{*}
\end{eqnarray*}
where the last inequality follows from the first constraint of \textbf{LP3}.
Hence $(x_{new}^{*},y_{new}^{*})$ satisfies the capacity constraint
for $u\in U$, and so is $(x^{t},y^{t})$ since it is a convex combination
of $(x^{*},y^{*}),(x_{new}^{*},y_{new}^{*})$.

Finally, for the new $(x^{*},y^{*})$ note that $T_{u}$ may have
changed in size but this is okay by design since $(k_{u}-|T_{u}|)x_{u}^{t}$
changes exactly by $(\#\text{new elements in }T_{u})\cdot x_{u}^{t}$.
\end{proof}
\begin{lem}
\label{lem:feas2}The final solution $(\bar{x},\bar{y})$ is feasible
with $\bar{x}$ integral.
\end{lem}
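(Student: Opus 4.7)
The plan is to mirror the proof of Lemma \ref{lem:feas}, checking three things: integrality of $\bar{x}$, full coverage of every edge, and the capacity constraint. Integrality is immediate from inspection of the algorithm: the only places $\bar{x}$ is assigned are step 2 (value $0$), step 5(b) (value $1$), and step 8 (value $\lceil x_v^\ast\rceil$), all of which are non-negative integers.

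For edge coverage, I would maintain throughout the algorithm the invariant that for every original $e\in E$, the already-committed values $\bar{y}(e,v)$ (for $v\in D$ or $v$ removed from $e$ in step 5(c)) together with the currently live $y^\ast(e,v)$ sum to $1$. This holds initially by LP1 feasibility. The key preservation step is step 4: a direct cancellation using the definitions in 4(a)--(c) shows $\sum_{v\in e}y^\ast_{new}(e,v)=\sum_{v\in e}y^\ast(e,v)$, since the extra terms $\sum_{w\in e\cap W}(y^\ast(e,w)-y^\ast_{new}(e,w))$ folded into $y^\ast_{new}(e,u)$ exactly cancel the change in the $W$-contribution. The linear interpolation in step 5 is sum-preserving, and each commit in 5(b), 5(c), and ultimately step 8 merely transfers a value from the live side to the committed side. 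Tight edges processed in 5(a) are handled separately: we directly set $\bar{y}(e,u)=1$ and $\bar{y}(e,v)=0$ for $v\in e\setminus\{u\}$, so such edges are covered by fiat.

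For the capacity constraint, Lemma \ref{lem:capFeas2} gives, at every moment, $\sum_{e\in\delta(u)\setminus T}y^\ast(e,u)\le(k_u-|T_u|)x_u^\ast$ for $u\in U$ and $\sum_{e\in\delta(w)\setminus T}y^\ast(e,w)\le k_w x_w^\ast$ for $w\in W$. A vertex committed in step 5(b) has $x_v^t=1/f$, so $\sum_{e\in\delta(v)\setminus T}\bar{y}(e,v)\le k_v/f\le k_v=k_v\bar{x}_v$. For a vertex rounded in step 8, $\bar{x}_v=\lceil x_v^\ast\rceil\ge x_v^\ast$ scales the right-hand side up while $\bar{y}=y^\ast$ is unchanged. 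For $u\in U$, Lemma \ref{lem:stayabove2} supplies $\bar{x}_u\ge1$, which is precisely what justifies decrementing $k_u$ by one unit for each tight edge placed into $T_u$, making the LP3 constraint $(k_u-|T_u|)x_u^\ast$ translate correctly into $k_u\bar{x}_u\ge|T_u|+\sum_{e\in\delta(u)\setminus T}\bar{y}(e,u)$. Together with Lemma \ref{b-matching}, this yields an integral feasible solution to the original VCHC instance.

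The main obstacle is the bookkeeping around step 4's redistribution combined with step 5(c): when an endpoint $u$ is removed from an edge (rather than added to $D$), one must verify that committing $\bar{y}(e,u)=0$ at the precise instant $y^t(e,u)=0$ keeps the coverage invariant intact across subsequent iterations, so that step 8 closes it out at termination. The other computations---telescoping in 4, convex combinations in 5, and rounding-up in 8---are routine once the invariant is set up.
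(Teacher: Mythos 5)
Your proposal is correct and follows essentially the same route as the paper's proof: integrality by inspection, capacity via Lemma \ref{lem:capFeas2} together with Lemma \ref{lem:stayabove2} (so that each $u\in U$ can pay for its tight edges $T_u$), and the remaining bound/nonnegativity constraints enforced by the stopping rules of step 5. The paper's own proof is terser and leaves the edge-coverage invariant implicit (deferring to the analogue in Lemma \ref{lem:feas}), whereas you verify the telescoping in step 4 explicitly, which is a harmless elaboration rather than a different approach.
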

\begin{proof}
The fact that $\bar{x}$ is integral simply follows from the description
of the algorithm. The capacity constraints are satisfied by Lemma
\ref{lem:capFeas2} and the fact that any $u\in U$ would be rounded
up to $\bar{x}_{u}\geq1$ eventually (Lemma \ref{lem:stayabove2})
and therefore can pay for covering the tight edges $T_{u}$.

The other constraints $0\leq x_{v}\leq m_{v},y\geq0$ are guaranteed
by step 5 of the algorithm where we stop at $(x^{t},y^{t})$ before
they can be violated.
\end{proof}
Finally we prove the approximation guarantee, which again mostly follows
from examining the structure of an extreme point solution.
\begin{lem}
\label{lem:extpt2}When the algorithm terminates, $|W|\leq|U^{=}|$
where $U^{=}:=\{u\in U:x_{u}^{*}=m_{u}\}$.
\end{lem}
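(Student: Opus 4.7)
The plan is to adapt the extreme-point counting argument from the proof of Lemma \ref{lem:extpt}, specialized to LP3. The accounting here is substantially simpler because LP3 is an LP in the variables $\{x_w\}_{w\in W}$ and $\{x_u\}_{u\in U_{>}}$ only; the $y$ values are not LP variables, but are derived from $x^*$ after the fact in step 4 of the algorithm. So the only constraints that can be tight at an extreme point of LP3 are the covering constraint (\ref{eq:2.1-2}) (one per $u\in U_{>}$), the box constraints $0\le x_w \le 1/f$ for $w \in W$, and the box constraints $1/f\le x_u \le m_u$ for $u \in U_{>}$.

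The first step is to rule out three of the four box constraints from being tight at termination, namely $x_w \geq 0$, $x_w\leq 1/f$ and $x_u\geq 1/f$. The termination criterion in step 7 (``reaching $x_{new}^*$ without stopping'') means that the boundary events in step 5 do not occur at the final extreme point; combined with the processing of $Z$ in step 2 at the start of each iteration, this forces $Z=\emptyset$ and $U_{=}=\emptyset$ at termination and hence the strict inequalities $0<x_w<1/f$ and $x_u>1/f$ for all remaining $w,u$. This is the analog of the opening observation in the proof of Lemma \ref{lem:extpt} that $0\le x_w\le 1/f$ and $1/f\le x_u$ cannot be tight.

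The second step is then a one-line count. The number of possibly tight constraints is at most $|U_{>}|$ (from (\ref{eq:2.1-2})) plus $|U^{=}|$ (from upper bounds $x_u=m_u$, by the definition of $U^{=}$). The number of variables of LP3 is $|W|+|U_{>}|$. Since $x^*$ is an extreme point of LP3, the tight constraint matrix must have rank equal to the number of variables, so
\[
|U_{>}|+|U^{=}| \;\geq\; |W|+|U_{>}|,
\]
which rearranges to $|W|\leq |U^{=}|$, as desired.

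The main obstacle will be carefully verifying the first step: one must interpret the termination semantics precisely enough to conclude that the stopping events of step 5 really are absent at $t=1$ and that no new elements linger in $Z$ or $U_{=}$. Once that is cleanly set up, the second step is essentially immediate, and the overall argument is much shorter than that of Lemma \ref{lem:extpt} because LP3 carries no edge-covering equalities and no $y$-related inequalities to account for.
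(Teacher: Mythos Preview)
Your proposal is correct and follows essentially the same approach as the paper's own proof. Both arguments count the variables of LP3 ($|W|+|U_{>}|$, equivalently $|W|+|U|$ since $U_{=}=\emptyset$ at termination) against the tight constraints, rule out the box constraints $0\le x_w\le 1/f$ and $x_u\ge 1/f$ via the termination/stopping semantics, and conclude from the extreme-point rank condition that $|W|\le |U^{=}|$; your write-up is simply more explicit about why those box constraints cannot be tight, whereas the paper asserts this in a single line.
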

\begin{proof}
This is very similar to the previous proof. Note that the number of
variables is $|W|+|U|$. On the other hand, $0\leq x_{w}\leq1/f$
and $x_{u}\geq1/f$ cannot be tight. The number of tight constrains
$x_{u}\leq m_{u}$ is $|U^{=}|$ while there can be at most $|U|$
tight first covering constraints. So $|W|\leq|U^{=}|$.
\end{proof}
\begin{thm}
Our algorithm is a $f$-approximation.
\end{thm}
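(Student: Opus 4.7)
The plan is to follow the template of Theorem \ref{thm:main}, adjusting only for the fact that we now iteratively resolve \textbf{LP3} rather than \textbf{LP2} and that vertices are committed to $D$ along the continuous path $(x^t, y^t)$ rather than at a fresh LP solve. Feasibility of the final integral $(\bar{x}, \bar{y})$ is already established by Lemma \ref{lem:feas2}, and Lemma \ref{b-matching} then converts any fractional $y$ into an integral assignment, so all the work lies in bounding $\sum_v \bar{x}_v$.

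The first step is a monotonicity argument: whenever we resolve \textbf{LP3} in step 4, the previous solution $(x^*, y^*)$ (restricted to the surviving variables) remains feasible for the updated \textbf{LP3}. This is essentially immediate from how the covering right-hand-sides $\sum_w M(u,w) x_w^* + \sum_e y^*(e,u)$ are defined; hence the new \textbf{LP3} optimum cannot exceed the old objective restricted to the new variables. Combined with Lemma \ref{lem:stayabove2}, this means the \textbf{LP3} objective is nonincreasing across iterations.

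Next I would account for commitments made during step 5. In case 5(a) a tight edge is absorbed into $T_u$, which costs no extra $x$ since Lemma \ref{lem:stayabove2} guarantees $\bar{x}_u \geq 1$ will eventually pay for it; case 5(c) simply drops a zero-valued $y$ variable and costs nothing. The only lossy commitment is case 5(b): when $x_v^t = 1/f$ we set $\bar{x}_v = 1$, which is a factor-$f$ blowup over the contribution $x_v^t = 1/f$ that this vertex was making to the current \textbf{LP3} objective. Because the objective is nonincreasing, each such unit of committed cost can be charged to a corresponding $1/f$ contribution in some earlier \textbf{LP3} value, hence to $1/f$ of the LP1 optimum, giving a factor $f$ overall.

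Finally, for the last extreme-point solution $(x_{last}^*, y_{last}^*)$, the remaining cost is
\[
|W| + \sum_{u \in U^=} m_u + \sum_{u \in U_{>} \setminus U^=} \lceil x_{last,u}^* \rceil.
\]
Invoking Lemma \ref{lem:extpt2} to replace $|W|$ with $|U^=|$, and then using $m_u \geq 1$, $f \geq 2$, and $x_{last,u}^* \geq 1/f$ as in the proof of Theorem \ref{thm:main}, bounds this quantity by $f \cdot \mathrm{cost}(x_{last}^*, y_{last}^*)$, finishing the argument. The main subtlety I expect is making sure the accounting across the continuous path in step 5 does not double-count: each vertex inserted into $D$ leaves the \textbf{LP3} objective by exactly $x_v^t = 1/f$, and each subsequent resolve only reduces the objective, so the telescoping charge is clean. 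Given the earlier lemmas this is essentially bookkeeping rather than a genuinely new obstacle.
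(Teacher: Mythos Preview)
Your proposal is correct and follows essentially the same approach as the paper's own (very terse) proof: you invoke feasibility via Lemma~\ref{lem:feas2}, observe that the old $(x^{*},y^{*})$ remains feasible for the updated \textbf{LP3} so the objective is nonincreasing, charge the factor-$f$ blowup for each step~5(b) commitment against the $1/f$ it removes from the objective, and then reuse the inequality from Theorem~\ref{thm:main} together with Lemma~\ref{lem:extpt2} for the final rounding. The paper compresses all of this into three sentences pointing back to Theorem~\ref{thm:main}; you have simply unpacked the same bookkeeping explicitly.
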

\begin{proof}
This is very similar to Theorem \ref{thm:main}. The cost of rounding
up intermediate $x_{v}^{t}=1/f$ incurs a factor of $f$. Since $(x^{*},y^{*})$
is feasible by design and Lemma \ref{lem:capFeas2}, it remains to
show that the last step incurs a factor of at most $f$. The same
inequality in the proof of Theorem \ref{thm:main} works.
\end{proof}

\section{Even Faster Implementation for Graphs ($f=2$)}

For the case of graphs $f=2$ we may in fact drop the variables $x_{u}$
from \textbf{LP3} altogether. This is exactly the same covering LP
used in CGW's two-stage rounding algorithm. In this section we show
that by performing iterative rounding while moving from old to new
optima, their algorithm actually achieves a 2-approximation for graphs.

In our opinion, this approach nicely explains why the randomized rounding
scheme in \cite{GandhiHKKS06} would work. The algorithm of \cite{GandhiHKKS06}
gives a 2-approximation for VCHC on simple graphs by essentially rounding
\textbf{LP2} via a simple randomized rounding and performing some
patching work after. However, a drawback of their work is that the
analysis involves many pages of calculations and gives little insights
into why the algorithm would give a 2-approximation.

In a way, their argument is a probabilistic proof that a 2-approximate
solution exists for \textbf{LP4}, and our one-page analysis can be
viewed as a simple deterministic proof of that claim.

\paragraph{LP relaxation}

Instead of \textbf{LP3}, we use the following simpler \textbf{LP4}:

\begin{subequations}\label{LP2-1} 
\begin{alignat}{2}
\text{min } & \qquad\qquad\sum_{w\in W}x_{w}\nonumber \\
\text{s.t. } & \sum_{w\in W}M(u,w)x_{w}\geq\sum_{w\in W}M(u,w)x_{w}^{*} & \quad & \forall u\in U\label{eq:2.1-1}\\
 & \qquad\qquad0\leq x_{w}\leq1 & \quad & \forall w\in W\label{eq:2.2-1}
\end{alignat}

\end{subequations}where 
\[
M(u,w)=\sum_{e=uw\in E\backslash T}\frac{y^{*}(e,w)}{x_{w}^{*}}.
\]

We first give an overview of the algorithm. The algorithm is mostly
a specialization of the previous one to $f=2$ but the constraint
$x_{u}\geq1/2$ is now redundant as an edge cannot have both endpoints
in $W$ for graphs. One crucial difference in the algorithm is that
we are rounding up $y^{*}(e,u)$ whenever $y^{*}(e,u)\cdot(\lceil x_{u}^{*}\rceil/x_{u}^{*})\geq1$
(rather than just when $x_{u}^{*}=y^{*}(e,u)$). This is still okay
as the final capacity is $k_{u}\lceil x_{u}^{*}\rceil$ so we can
blow up the coverage by a factor of $\lceil x_{u}^{*}\rceil/x_{u}^{*}$.

As for the analysis, to bound the cost of the extreme point solution
it is not enough to use only a cardinality inequality like $|W|\leq|U^{=}|$.
Instead we need a finer \textit{matching} structure between $U$ and
$W$ that was also used by CGW.

Our algorithm is as follows. Unlike before, \textit{we never redefine
or update $U$ and $W$.}

\begin{center}
\noindent\fbox{\begin{minipage}[t]{1\columnwidth - 2\fboxsep - 2\fboxrule}%
\begin{center}
\textbf{Iterative Rounding Algorithm using Simpler Covering LPs for
Graphs}
\par\end{center}
\begin{enumerate}
\item Solve \textbf{LP1} for $x^{*},y^{*}$. Let $T=\emptyset$.
\item For any $u\in U$ and $e\in\delta(u)\backslash T$ satisfying $y^{*}(e,u)\cdot(\lceil x_{u}^{*}\rceil/x_{u}^{*})\geq1$,
set $T\longleftarrow T\cup\{e\}$.
\item Solve updated \textbf{LP4} for an extreme point optimum $x_{new}^{*}$.
\item Consider $x^{t}=(1-t)x^{*}+tx_{new}^{*}$, where $t$ \textbf{continuously}
increases from 0 to 1. Let
\[
y^{t}(e,w)=\frac{y^{*}(e,w)}{x_{w}^{*}}x_{w}^{t}\text{, }y^{t}(e,u)=1-y^{t}(e,w)\quad\forall e=uw\in(U\times W)\backslash T
\]
\item Stop whenever $y^{t}(e,u)\cdot(\lceil x_{u}^{*}\rceil/x_{u}^{*})=1$
for some $u\in U$ and $e\in(U\times W)\backslash T$, set $T\longleftarrow T\cup\{e\}$.
\item Update $M(u,w)$ with $(x^{*},y^{*})\longleftarrow(x^{t},y^{t})$.
Repeat steps 3-6.
\item If $t=1$ (i.e. no pair $(e,u)$ satisfies the condition in step 5
for any intermediate $t$), output $\bar{x}_{v}=\lceil x_{v}^{*}\rceil$
with a corresponding $y$ defined by 
\[
\bar{y}(e,u)=y^{*}(e,u)\text{, }\bar{y}(e,v)=y^{*}(e,v)\quad\forall e\notin T
\]
\end{enumerate}
\[
\bar{y}(e,u)=1,\bar{y}(e,v)=0\quad\forall e=uw\in T,u\in U,w\in W
\]
\end{minipage}}
\par\end{center}

One difference is that we are rounding up $y^{*}(e,u)$ whenever $y^{*}(e,u)\cdot(\lceil x_{u}^{*}\rceil/x_{u}^{*})\geq1$
(rather than just when $x_{u}^{*}=y^{*}(e,u)$). This is still okay
as the final capacity is $k_{u}\lceil x_{u}^{*}\rceil$ so we can
blow up the coverage by a factor of $\lceil x_{u}^{*}\rceil/x_{u}^{*}$.
Curiously this modification is needed to handle the case $1<x_{u}^{*}\leq2$
when bounding the approximation ratio.

The feasibility of the algorithm is largely the same as before so
we skip the proof (in fact, it is more like an easier special case).
To bound the cost of the extreme point solution, it is not enough
to use only a cardinality inequality like $|W|\leq|U^{=}|$ (one fact,
now one does not have this but $|W|\leq|U|$). Instead we need a finer
\textit{matching} structure between $U$ and $W$ as given in Lemma
\ref{lem:extpt3}.
\begin{lem}
$(\bar{x},\bar{y})$ is a feasible solution to VCHC.
\end{lem}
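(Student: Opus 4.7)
The plan is to reduce feasibility of $(\bar{x},\bar{y})$ for VCHC to \textbf{LP1}-feasibility, after which Lemma \ref{b-matching} yields an integral $\bar{y}'$ making $(\bar{x},\bar{y}')$ feasible for VCHC. Since $\bar{x}_v=\lceil x_v^*\rceil$ is integral with $\bar{x}_v\le m_v$ (as $m_v$ is integer and $x_v^*\le m_v$), it suffices to verify (i) edge coverage $\sum_{v\in e}\bar{y}(e,v)=1$, (ii) the pointwise bound $\bar{y}(e,v)\le\bar{x}_v$, and (iii) the capacity bound $\sum_{e\in\delta(v)}\bar{y}(e,v)\le k_v\bar{x}_v$.

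For (i) and (ii), tight $e=uw\in T$ are handled by the output rule $\bar{y}(e,u)=1,\bar{y}(e,w)=0$. Non-tight $e=uw\in(U\times W)\setminus T$ inherit coverage from the hardwired identity $y^t(e,u)+y^t(e,w)=1$ built into step 4, while edges inside $U$ are never touched by step 4 and retain their LP1 coverage identity. There are no edges entirely inside $W$, because $y^*(e,w)\le x_w^*<1/2$ for every $w\in W$ would leave such an edge uncovered. The pointwise bound is immediate since every $\bar{y}$ value lies in $[0,1]$ and $\bar{x}_v\ge 1$ for any $v$ carrying positive coverage.

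The core of the proof is (iii). For $w\in W$, step 4's proportional update $y^t(e,w)=(y^*(e,w)/x_w^*)\,x_w^t$ preserves the ratio $y^t(e,w)/x_w^t$ and hence the LP1 invariant $\sum_{e\in\delta(w)\setminus T}y^*(e,w)\le k_w x_w^*$; combined with $\bar{x}_w\ge x_w^*$, this yields the capacity bound for $w$. For $u\in U$, I plan to maintain the invariant
\[
\sum_{e\in T_u}\frac{x_u^*}{\lceil x_u^*\rceil}+\sum_{e\in\delta(u)\setminus T}y^*(e,u)\;\le\;k_u x_u^*,
\]
which holds initially by LP1's capacity row and has $x_u^*$ fixed throughout (since $x_u$ is not a variable of \textbf{LP4}). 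Adding $e$ to $T_u$ in step 2 or 5 only shrinks the left-hand side, because the trigger $y^*(e,u)\ge x_u^*/\lceil x_u^*\rceil$ replaces $y^*(e,u)$ by the no-larger quantity $x_u^*/\lceil x_u^*\rceil$. Continuous motion of $t$ in step 4 only shrinks the left-hand side as well: \textbf{LP4}'s constraint $\sum_w M(u,w)x_w\ge\sum_w M(u,w)x_w^*$ holds at both endpoints and hence, by convexity, along the segment, forcing $\sum_{e=uw\in(U\times W)\setminus T}y^t(e,w)$ to weakly increase and $y^t(e,u)=1-y^t(e,w)$ to weakly decrease, while $U\times U$ edges do not move. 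Multiplying the invariant through by $\lceil x_u^*\rceil/x_u^*\ge 1$ and using $\bar{x}_u=\lceil x_u^*\rceil$ gives $|T_u|+\sum_{e\notin T}y^*(e,u)\le k_u\bar{x}_u$, which is exactly the LP1 capacity row for $u$.

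I expect the main obstacle to be verifying preservation of the $u$-capacity invariant under the continuous motion of step 4, which requires coupling LP4's monotonic covering constraint with the scaling trick $\lceil x_u^*\rceil/x_u^*$ and the fact that $x_u^*$ itself is frozen during LP4 iterations. Once this invariant is secured, conditions (i)--(iii) combine with Lemma \ref{b-matching} to close out feasibility.
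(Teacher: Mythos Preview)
Your argument is correct and aligns with the paper's approach, which simply defers to the analogous Lemmas~\ref{lem:capFeas2} and~\ref{lem:feas2}; you have filled in the details, in particular the right adaptation of the capacity invariant to incorporate the $\lceil x_u^*\rceil/x_u^*$ scaling that is specific to this section. One small wording point: the monotonicity you invoke for $\sum_{e=uw}y^t(e,w)$ along the segment follows from \emph{linearity} of the interpolation (not convexity), but since the endpoint value dominates the starting value this indeed gives weak increase in $t$, so your invariant argument goes through as stated.
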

\begin{proof}
Similar to Lemmas \ref{lem:capFeas2} and \ref{lem:feas2}.
\end{proof}
\begin{lem}[Similar to Theorem 3.2 of \cite{cheung2014improved}]
\label{lem:extpt3}For any extreme point solution $x^{*}$ to LP4,
let $W_{f}=\{w\in W:0<x_{w}^{*}<1\}$. Then there exists a matching
between $W_{f}$ and $U$ that fully matches $W_{f}$.
\end{lem}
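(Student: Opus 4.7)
The plan is to establish the matching by a rank/determinant argument on the tight constraint matrix at the extreme point $x^*$, in the same spirit as standard iterative-rounding matching lemmas.

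First I would set up the counting. The LP4 has $|W|$ variables, namely $\{x_w : w \in W\}$, and three families of constraints: the covering constraints indexed by $u \in U$, and the bound constraints $x_w \ge 0$ and $x_w \le 1$. At an extreme point we may pick $|W|$ linearly independent tight constraints. By definition of $W_f$, for every $w \in W_f$ we have $0 < x_w^* < 1$, so neither of the bound constraints on such a $w$ is tight. Therefore the tight bound constraints account for at most $|W \setminus W_f|$ of the $|W|$ linearly independent tight rows, and so at least $|W_f|$ of the tight rows must come from covering constraints. Let $U_t \subseteq U$ denote the set of $u$'s whose covering constraint is tight; then $|U_t| \ge |W_f|$.

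Next I would extract the matching. After using the tight upper/lower bound rows to eliminate all variables $x_w$ with $w \in W \setminus W_f$, the remaining tight covering rows must be linearly independent when restricted to the columns indexed by $W_f$. Hence the $|U_t| \times |W_f|$ matrix $M$ restricted to rows $U_t$ and columns $W_f$ has column rank $|W_f|$. Pick any $|W_f|$ rows from $U_t$ that give a nonsingular $|W_f| \times |W_f|$ submatrix $M'$, indexed by some $U^* \subseteq U_t$. Since $\det M' \neq 0$, the Leibniz expansion has a nonvanishing term, which means there is a bijection $\sigma : W_f \to U^*$ with $M(\sigma(w), w) \neq 0$ for every $w \in W_f$. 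By definition of $M(u,w) = \sum_{e \in E_u \cap \delta(w)} y^*(e,w)/x_w^*$, the relation $M(\sigma(w), w) > 0$ forces the existence of an edge $e = \sigma(w) w$ in $G$. The map $\sigma$ is therefore a matching in $G$ between $W_f$ and a subset of $U$ that fully matches $W_f$.

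The main obstacle, as is typical for such extreme-point matching arguments, is justifying the rank claim cleanly: one must verify that after accounting for tight bound constraints the remaining tight covering rows, restricted to the $W_f$-columns, genuinely form a full-rank $|W_f|$-dimensional system. This is a standard elimination argument — the rows corresponding to tight bounds $x_w = 0$ or $x_w = 1$ have a single nonzero entry in a column outside $W_f$, so they can be used to triangulate and discard those columns without affecting linear independence of the remaining rows on the $W_f$ coordinates. Once this is in hand, the determinant/permutation extraction gives the matching essentially for free.
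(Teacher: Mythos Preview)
Your proof is correct and follows essentially the same approach as the paper's: count tight constraints to see that at least $|W_f|$ tight covering rows must be used, observe that they yield a nonsingular $|W_f|\times|W_f|$ submatrix on the $W_f$-columns, and extract the matching from a nonvanishing term in the Leibniz expansion of its determinant. (One trivial slip: you quoted the LP3 formula for $M(u,w)$ rather than the LP4 formula $M(u,w)=\sum_{e=uw\in E\setminus T} y^{*}(e,w)/x_w^{*}$, but this is immaterial since in either case $M(u,w)>0$ forces an edge $uw$.)
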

\begin{proof}
Fractional $x_{w}^{*}$ can only arise from setting $|W_{f}|$ covering
constraints $\sum_{w\in W}M(u,w)x_{w}\geq\sum_{w\in W}M(u,w)x_{w}^{*}$
tight. Consider the system of equations $Ax=b$ obtained from setting
these $|W_{f}|$ constraints tight. Note that $A$ is $|W_{f}|\times|W_{f}|$
with the rows and columns indexed by $|U|$ and $|W_{f}|$ respectively.
$A$ is invertible so 
\[
\det A=\sum_{\text{permutation }\pi}\left(\pm\prod_{i}a_{i,\pi(i)}\right)\neq0
\]
 which implies that there is a permutation with all $a_{i,\pi(i)}\neq0$.
Such a permutation gives our desired matching since nonzero entries
of $A$ corresponds to $M(u,w)\neq0$ in which case $uw$ is an edge.
\end{proof}
\begin{thm}
$(\bar{x},\bar{y})$ is a 2-approximation.
\end{thm}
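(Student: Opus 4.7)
The plan is to mirror the charging argument of Theorem \ref{thm:main}, but to replace the cardinality bound $|W|\le|U^=|$ by the finer matching structure of Lemma \ref{lem:extpt3}. Because LP4 never touches the variables $\{x_u\}_{u\in U}$, each $x_u^*$ stays at its initial LP1 value throughout, and the constraint $x_w\le 1$ forces each $\bar{x}_w\in\{0,1\}$, so the final cost decomposes as
\[
\text{cost}(\bar{x}) \;=\; \sum_{u\in U}\lceil x_u^*\rceil \;+\; \bigl|\{w\in W : x_w^{\text{final}}>0\}\bigr|.
\]
The previous $x_w^*$ is feasible for every new LP4 (the first constraint is tight at $x_w=x_w^*$), so $\sum_w x_w$ is non-increasing across iterations and $\sum_w x_w^{\text{final}}$ is at most its initial LP1 value. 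It therefore suffices to bound $\text{cost}(\bar{x})$ by $2\sum_{u\in U}x_u^* + 2\sum_{w\in W}x_w^{\text{final}}$, since that is at most $2\cdot\text{LP1 opt}\le 2\cdot\text{OPT}$.

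The first key step is to observe that the matching of Lemma \ref{lem:extpt3} actually lies inside $W_f\times U_>$, never touching $U_=$. Indeed, if $u\in U_=$ had any neighbour $w\in W$ in the graph, then the LP1 equalities would force $1 = y(e,u)+y(e,w)\le x_u^*+x_w^* = \tfrac12 + x_w^* < 1$ (using the initial $x_w^*<1/2$ for $w\in W$), a contradiction. Hence no graph edge joins $U_=$ to $W$ and $M(u,w)=0$ for every $u\in U_=$, so the permutation extracted from the Leibniz expansion of $\det A$ in Lemma \ref{lem:extpt3} matches each $w\in W_f$ to a distinct $u\in U_>$ adjacent to $w$ in the graph.

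With this in hand I will charge the cost in three buckets: each unmatched $u\in U$ pays $\lceil x_u^*\rceil\le 2x_u^*$ (since $x_u^*\ge 1/2$); each $w$ with $x_w^{\text{final}}=1$ pays $1 = x_w^{\text{final}}\le 2x_w^{\text{final}}$; and each matched pair $(u,w)\in U_>\times W_f$ is charged jointly $\lceil x_u^*\rceil + 1$. The main claim is the pairwise inequality $\lceil x_u^*\rceil + 1 \le 2\bigl(x_u^* + x_w^{\text{final}}\bigr)$. Set $k=\lceil x_u^*\rceil$. The fact that $M(u,w)\neq 0$ supplies an edge $e=uw\notin T$ with $y(e,w)>0$; the stopping rule at Step 5 forces $y(e,u)<x_u^*/k$, while the identity $y^t(e,w)/x_w^t=y^*(e,w)/x_w^*$ preserved by the update formula $y^t(e,w)=(y^*(e,w)/x_w^*)x_w^t$ ensures $y(e,w)\le x_w$ throughout. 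Combined with $y(e,u)+y(e,w)=1$ these give $x_w^{\text{final}} > 1 - x_u^*/k$, i.e.\ $x_u^* + k\,x_w^{\text{final}} > k$. A short case split on whether $x_w^{\text{final}}\le 1/2$, together with $x_u^* > \max\{1/2,\,k-1\}$ coming from $u\in U_>$ and $k=\lceil x_u^*\rceil$, then converts this to $x_u^* + x_w^{\text{final}} \ge (k+1)/2$, which is the desired pairwise bound.

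Summing the three buckets delivers $\text{cost}(\bar{x})\le 2\sum_u x_u^* + 2\sum_w x_w^{\text{final}}$ and hence the 2-approximation. The hardest part is the pairwise inequality above: Lemma \ref{lem:extpt3} by itself gives only $|W_f|\le|U|$, which would yield a mere 4-approximation, so one must extract \emph{both} the refinement $|W_f|\le|U_>|$ from the $U_=$-to-$W$ non-edge structure \emph{and} the quantitative slack $y(e,u)<x_u^*/\lceil x_u^*\rceil$ enforced by the modified stopping rule of Step 5.
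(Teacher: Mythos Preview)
Your proof is correct and follows essentially the same route as the paper: charge unmatched $u\in U$ by $\lceil x_u^*\rceil\le 2x_u^*$, charge each $w$ with $x_w^{\text{final}}=1$ to itself, and for each matched pair $(u,w)$ prove the pairwise inequality $\lceil x_u^*\rceil+1\le 2(x_u^*+x_w^{\text{final}})$ using the stopping rule $y^*(e,u)\le x_u^*/\lceil x_u^*\rceil$ together with $y^*(e,w)\le x_w^{\text{final}}$ and $y^*(e,u)+y^*(e,w)=1$. The paper organizes the case analysis for this inequality by the value of $\lceil x_u^*\rceil$ (splitting into $k=1$, $k\ge 3$, and $k=2$ with two subcases), whereas you derive $x_u^*+k\,x_w^{\text{final}}\ge k$ and case-split on $x_w^{\text{final}}\lessgtr 1/2$; both arguments are short and equivalent.

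Two minor remarks. First, your observation that no graph edge joins $U_=$ to $W$ (hence the matching of Lemma~\ref{lem:extpt3} lands in $U_>$) is correct but not needed: when $x_u^*=1/2$ one has $k=1$, and then $x_u^*+x_w^{\text{final}}\ge 1=(k+1)/2$ already follows from $y^*(e,u)\le x_u^*$ and $y^*(e,w)\le x_w^{\text{final}}$ without any appeal to $u\in U_>$. So the refinement $|W_f|\le |U_>|$ is not the essential ingredient; the real content (as you also emphasize) is the quantitative bound $y^*(e,u)\le x_u^*/\lceil x_u^*\rceil$ enforced by Step~5. Second, you are more explicit than the paper in justifying $\sum_w x_w^{\text{final}}\le\sum_w x_w^{\text{LP1}}$ via monotonicity of the LP4 objective across iterations, which is a nice clarification.
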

\begin{proof}
The cost of $U$ and $x_{v}=0$ clearly incurs a factor 2 blowup only.
We need to account for the cost of selecting $W_{f}$. Now that we
don't have $|W_{f}|\leq|U^{=}|$, we cannot naively charge $|W_{f}|$
to $U^{=}$. However, we have a matching between $W_{f}$ and $U$
and it suffices to show that for any edge $e=uw$ in the matching,
\[
\lceil x_{u}^{*}\rceil+1=\lceil x_{u}^{*}\rceil+\lceil x_{w}^{*}\rceil\leq2(x_{u}^{*}+x_{w}^{*}).
\]
Note that we have $1=y^{*}(e,u)+y^{*}(e,w)\leq x_{u}^{*}+x_{w}^{*}$
and $y^{*}(e,u)\cdot(\lceil x_{u}^{*}\rceil/x_{u}^{*})\leq1$ (otherwise
step 5 would have applied). We have three cases:

\begin{itemize}
\item $\lceil x_{u}^{*}\rceil=1$. Then we are done.
\item $\lceil x_{u}^{*}\rceil\geq3$. Then $\lceil x_{u}^{*}\rceil+1\leq2x_{u}^{*}$.
\item $x_{u}^{*}+x_{w}^{*}\geq1.5$ and $x_{u}^{*}\leq2$. Then $\lceil x_{u}^{*}\rceil+1\leq3\leq2(x_{u}^{*}+x_{w}^{*})$.
\item $x_{u}^{*}+x_{w}^{*}\leq1.5$ and $1<x_{u}^{*}\leq2$, which cannot
happen since $1=y^{*}(e,u)+y^{*}(e,w)\leq x_{u}^{*}/2+x_{w}^{*}=x_{u}^{*}+x_{w}^{*}-x_{u}^{*}/2<1.5-1/2=1$.
Contradiction.
\end{itemize}
\end{proof}
In hindsight, CGW came close but failed to obtain a 2-approximation
because one does not have $x_{new,u}^{*}+x_{new,w}^{*}\geq1$.

\section{Open Problem}

In this paper we have settled the approximability of VCHC. While it
is UGC-hard to do better \cite{KhotR08}, the current best NP-hardness
inapproximability stands at $f-1$ \cite{DinurGKR05} and 1.36 for
graphs \cite{dinur2005hardness}. Is there any hope of proving that
it is NP-hard to beat $f$? In principle it should be easier than
doing it for vertex cover since VCHC is more general.

\section*{Acknowledgement}

We thank Wang Chi Cheung, Lap Chi Lau and Christos Papadimitriou for
several helpful discussions. We also thank the anonymous reviewers
for many helpful comments that improved the presentation of this paper.

\bibliographystyle{plain}
\bibliography{vchcref}
 
\end{document}